\documentclass[11pt]{llncs}

\usepackage[T1]{fontenc}

\usepackage{graphicx}
\usepackage[margin=1in]{geometry}

\usepackage{amsmath,amssymb}

\usepackage{booktabs}
\usepackage[ruled]{algorithm2e}
\usepackage{tikz}
\usepackage{xcolor}
\usepackage[backref=page,colorlinks=true]{hyperref}
\author{Arghya Chakraborty\inst{1}\orcidID{0009-0005-5415-8916} \and
Varun Gupta\inst{2}\orcidID{0000-0001-7373-1734}}

\institute{Tata Institute of Fundamental Research, India, \email{arghya314@yahoo.com}\and
University of Utah, Utah 84112, USA \email{varun.gupta@eccles.utah.edu}}
\hypersetup{
pageanchor=true,
bookmarksnumbered,
colorlinks=true,
menubordercolor= [rgb]{0, 0, 1},
linkbordercolor= [rgb]{0, 1, 1},
citecolor = [rgb]{0.7, 0, 0},
linkcolor= [rgb]{0, 0, 0.8}
}

\SetAlFnt{\small}
\SetAlCapFnt{\small}
\SetAlCapNameFnt{\small}
\SetAlCapHSkip{0pt}
\IncMargin{-\parindent}
\allowdisplaybreaks

\newcommand{\OPT}{\mathsf{OPT}}
\newcommand{\E}{\mathbb{E}}
\newcommand{\gain}{\text{gain}}
\DeclareMathOperator{\Var}{Var}
\DeclareMathOperator{\rank}{rank}

\newenvironment{restatement}[1]
  {\par\medskip
   \noindent\textbf{Restatement of #1.}\itshape\ }
  {\par\medskip}

\title{Arrival-Time Incentive Compatibility in Random Order Online Bipartite Matching}

\date{}

\begin{document}

\maketitle
\thispagestyle{empty}

\begin{abstract}
A common motif in designing online algorithms in stochastic or random order settings is an initial exploration (or sampling) phase to learn useful information about the instance (e.g., dual prices, a primal resource allocation plan, or admission thresholds). By construction, exploration leads to unfair outcomes for early arrivals leading to incentives for agents to influence when they show up. The focus of the present paper is on edge-weighted online bipartite matching where one side (servers) is available offline, and the other side (users) arrives online in a random order (secretary model). There is a service reward associated with each (server, user) pair that the central designer wants to maximize. In an elegant generalization of the classic secretary algorithm, \cite{kesselheim2013optimal} propose an optimal $1/e$ competitive algorithm for edge-weighted online bipartite matching in the secretary model. A key ingredient of their algorithm is to effectively discard (that is, use only for learning) the first $\approx n/e$ arrivals, thereby ex-ante disadvantaging earlier arrivals relative to later arrivals.

In this work we initiate the study of competitive algorithms with arrival-time incentive compatibility for random-order online bipartite matching in settings where the users care only about receiving service (matched vs. unmatched) and not which offline resource serves them, while the platform’s objective is to maximize total matching reward. This captures applications such as ride-sharing where the users primarily care about being matched to a ride while the platform internalizes the cost of dispatching a distant driver; dispatching homogeneous service requests to heterogeneous servers (cloud/edge routing); and assigning customer requests to a pool of providers with different flexibility (e.g., English-only vs. bilingual agents). Our main question is: \textit{Is constant-competitive matching possible for incentive-compatible, random-order edge-weighted matching on complete bipartite graphs?}

Motivated by the LP-based treatment of incentive compatibility in the classical secretary problem by \cite{buchbinder2014secretary}, we impose a constraint that the ex ante probability of selection is equalized across all arrival positions. We answer our main question in the affirmative and propose the first constant-competitive algorithm for incentive compatible edge-weighted random-order online matching on complete bipartite graphs. The competitive ratio of our algorithm is parameterized by the imbalance factor $k := n/m$ -- where $n$ and $m$ are the numbers of online and offline nodes, respectively, and $k$ is a positive integer. In particular, we obtain a competitive guarantee of the form $c_k - O(1/\sqrt{m})$ where $c_1 \approx 0.162$ and $c_k \to 0.02308\ldots$ as $k \to \infty$. We also present algorithms with strictly improved competitive ratio of $\approx 0.07 + O(1/m)$ for the binary-weighted case ($0$-$1$ rewards).

\keywords{ Online bipartite matching \and Incentive compatibility \and Secretary problem \and Competitive analysis \and Random-order model.}
\end{abstract}

\setcounter{page}{1}

\section{Introduction}

Many marketplace and resource-allocation systems operate online: requests
arrive sequentially, decisions are irrevocable, and the algorithm has only
partial information about the future. Examples include ride-hailing and
delivery platforms, online advertising, admission control, and cloud/edge
routing. A canonical abstraction of such systems is \emph{online bipartite matching}: offline
vertices represent fixed resources, online vertices represent arriving users,
and each user must be matched immediately to an available resource or rejected.

A key modeling choice is the arrival process. Under adversarial arrivals,
competitive guarantees are often pessimistic, whereas the random-order
(secretary) model permits stronger algorithms. Many such algorithms use an
initial \emph{exploration} phase to learn prices, thresholds, or a primal
allocation plan, and then exploit this information on later arrivals. This
is effective algorithmically but creates temporal disparities: early arrivals
are treated differently from later ones.

The classical secretary problem already illustrates the issue. The optimal
$1/e$-success algorithm samples and rejects an initial prefix, giving agents
a clear incentive to avoid early positions. Buchbinder et al.~\cite{buchbinder2014secretary}
addressed this by requiring every arrival position to have the same ex-ante
selection probability, while still obtaining a constant probability of hiring
the best candidate. In online bipartite matching, however, existing
random-order algorithms are also time-asymmetric. In particular, the optimal
$1/e$-competitive edge-weighted matching algorithm of
Kesselheim et al.~\cite{kesselheim2013optimal} relies on using an initial prefix only for
learning. Such algorithms can induce users to delay, undermining the
random-order assumption they rely on.

\paragraph{Model and incentive constraint.}
We study edge-weighted online bipartite matching in the random-order model.
Offline vertices are heterogeneous servers, online vertices are users, and
the platform seeks to maximize total matching weight. Users care only about
receiving service, not about which server they are assigned to. We extend the
arrival-time incentive-compatibility constraint of
Buchbinder et al.~\cite{buchbinder2014secretary}: under a uniform prior over the
user's type and a uniformly random arrival order, the ex-ante probability of
being matched must be identical for every arrival position $t\in[n]$. Thus no
user can improve their ex-ante probability of service by manipulating their arrival time.

\paragraph{Central question.}
\emph{Is constant-competitive edge-weighted online bipartite matching
possible in the random-order model under arrival-time incentive
compatibility?}
\paragraph{Results.}
Our main result can be interpreted as a separation between complete and incomplete bipartite graphs in regard to the answer to the question above. We first prove that constant competitiveness is not possible under arrival-time incentive compatibility if the bipartite graph has \textit{infeasible edges}, i.e., is not necessarily complete (Section~\ref{sec:lb-forbidden}). We therefore study \emph{complete} bipartite graphs, which better capture applications where every user can in principle be served by any server but the platform internalizes heterogeneous service costs, in the rest of the paper. For complete graphs, we give a surprising affirmative answer to the above question by presenting the first constant-competitive algorithms for arrival-time incentive compatible online bipartite matching in the secretary model. The separation between complete and incomplete bipartite graphs is interesting because adding $0$-weight edges to a graph does not alter the value of the maximum-weight matching, but such edges help us maintain incentive compatibility. Our competitive guarantees assume \(n=km\) for a positive integer \(k\), where \(m\) and \(n\) are the numbers of offline and online vertices, respectively. The following bullets summarize our results:
\begin{itemize}
    \item In bipartite graphs with infeasible match pairs (that is, not complete bipartite graphs), we prove that no incentive compatible online algorithm can guarantee a competitive ratio better than $1/n$, even in the unweighted setting where all feasible edges have weight 1 (Section~\ref{sec:lb-forbidden}).
    \item In the balanced case \(m=n\), we give an incentive-compatible algorithm achieving asymptotic competitive ratio \(\approx 0.162\) (Section~\ref{sec:balanced_wt}).
    \item For \(n=km\) with any positive integer \(k\), we give an incentive-compatible algorithm with competitive ratio \(c_k - O(1/\sqrt{m})\), where \(c_1\approx 0.162\) and \(c_k\to 0.02308\ldots\) as \(k\to\infty\) (Section~\ref{sec:edge_wt}).
    \item In the binary-weighted (edge weights in $\{0,1\}$) case, we provide an improved competitive ratio of $\approx 0.071$ (Section~\ref{sec:unweighted}).
\end{itemize}

\paragraph{Techniques and obstacles.}
Our edge-weighted algorithm starts from the random-order matching framework of Kesselheim et al.~\cite{kesselheim2013optimal}: as arrivals are revealed, the algorithm uses an optimal matching on the revealed instance to define high-value ``tentative'' edges. The obstacle is that this framework naturally treats online arrivals at different times asymmetrically. We enforce this symmetry through a block-based construction. The sequence is divided into equal-sized blocks, and within each block we run the incentive-compatible secretary algorithm of Buchbinder et al.~\cite{buchbinder2014secretary}, so that at most one online vertex is matched from each block.

Making this compatible with the random-order matching analysis requires two new ingredients. First, the secretary subroutine must see the tentative-edge weights inside a block in a random order, so we modify the Kesselheim et al.~\cite{kesselheim2013optimal} tentative-edge construction by evaluating each candidate against the previously revealed blocks together with that candidate alone. Second, we need to show that the offline endpoint of a selected tentative edge is still available with constant probability. This is not a direct consequence of the original analysis: in our imbalanced setting, a block contains several online vertices, and hence a fixed offline vertex can appear as a tentative endpoint multiple times within the same block. Our main analytical novelty is an \emph{amortized availability} argument that averages over both the random partition into blocks and the random order inside each block.
For binary weights, we use a separate greedy block algorithm and a different amortized accounting proof to obtain a better constant. We elaborate on these two ingredients in Section~\ref{sec:edge_wt}.

\paragraph{Organization.}
Section~\ref{sec:related} reviews related work. Section~\ref{sec:prelim} introduces the model and incentive-compat\-ibility definition. In Section~\ref{sec:lb-forbidden} we justify studying complete bipartite graphs by proving that no constant competitive ratio is possible if the bipartite graph has forbidden pairs of matches. In Section~\ref{sec:balanced_wt} we present an algorithm for incentive compatible bipartite matching in the balanced setting to highlight the main ideas, and then treat the general (imbalanced) edge-weighted setting in Section~\ref{sec:edge_wt}. In Section~\ref{sec:unweighted} we present a slightly different algorithm and a tighter analysis for the binary-weighted case (edge weights in $\{0,1\}$). We conclude in Section~\ref{sec:conclusion} with some open questions.

\section{Related Work}
\label{sec:related}

\textbf{Online bipartite matching. }
Online bipartite matching was introduced by Karp et al.~\cite{KarpVV90}, whose \textsc{Ranking} algorithm achieves the optimal $1-1/e$ competitive ratio for unweighted matching under adversarial arrivals. Subsequent work refined the analysis and primal--dual/randomized dual-fitting viewpoints; see Devanur et al.~\cite{DevanurJK13}. The vertex-weighted case was addressed by Aggarwal et al.~\cite{AggarwalGKM11}. In the fully edge-weighted setting, greedy is $1/2$-competitive under free disposal, and improving beyond $1/2$ requires more delicate correlation/control arguments; see Fahrbach et al.~\cite{FahrbachHTZ22JACM}. A parallel line, motivated by ad allocation and budgets, includes the AdWords framework of Mehta et al.~\cite{MehtaSVV07} and the survey of Mehta~\cite{MehtaSurvey13}.

\noindent \textbf{Random-order and stochastic arrivals.}
Our closest algorithmic backdrop is online matching in the random-order (secretary) model, where edge weights are adversarial but online vertices arrive in a uniformly random order. Early work connected random-input models to ad allocation \cite{GoelMehta08}; for unweighted matching, Mahdian et al.~\cite{MahdianYan11} obtained improved guarantees via factor-revealing LPs. For weighted matching, Kesselheim et al.~\cite{kesselheim2013optimal} gave an optimal random-order algorithm, and Korula et al.~\cite{KorulaPal09} developed the broader ``secretary on graphs'' perspective. \cite{kesselheim2014primal} studied online packing Linear Programs in the random order setting, and proposed an algorithm based on linear scaling of resource capacities. In terms of the ``capacity ratio'' (ratio of resource capacity to its maximum demand by any single arrival), their algorithm achieves a more favorable guarantee compared to dual price based algorithms. We study the same random-order edge-weighted matching problem, but impose an ex-ante time-symmetry / incentive-compatibility constraint that rules out algorithms whose learning phase systematically disadvantages early arrivals. While the primal-based algorithm of \cite{kesselheim2014primal} also avoids an explicit learning phase, it is not clear how their algorithm can be adapted to guarantee the arrival-time IC property. A stronger stochastic model assumes known i.i.d. arrivals, enabling ratios above the adversarial $1-1/e$ barrier in several settings \cite{FeldmanMMM09,ManshadiGS12,JailletLu14}; for stochastic weighted matching see Haeupler et al.~\cite{HaeuplerMZ11,brubach2020online}, and for unknown distributions see Karande et al.~\cite{KarandeMT11}.

\noindent \textbf{Fairness, incentives, and exploration.}
Fairness in irrevocable online decisions has been studied for online selection \cite{CorreaCDN21} and online matching, including group-level service guarantees \cite{MaXX20}, long-run fairness among dynamic agents
\cite{MaXu24}, and fair-division notions such as envy-freeness, proportionality, and maximin share in online matching with indivisible items \cite{HosseiniHIS23,HajiaghayiJSSS24}. These works typically constrain
fairness across agents, classes, or groups; our constraint is instead temporal, requiring equal matching probability across arrival positions. This is also related to incentive issues in sequential exploration, where
platforms use information or recommendations to induce exploration in bandits \cite{FrazierKKK14,MansourSS15}, persuasion/disclosure models \cite{KremerMP14}, and matching markets \cite{NgoPV24}. Our approach is complementary: rather than incentivizing exploration through information, we
build time-symmetry directly into the online matching rule while retaining constant competitive guarantees.

\section{Preliminaries}
\label{sec:prelim}
An instance is a weighted complete bipartite graph
$G=(L,R,w)$, where $L=\{u_1,\ldots,u_m\}$ is the offline
side, $R=\{v_1,\ldots,v_n\}$ is the online side, and
$w:L\times R\to\mathbb{R}_{\ge 0}$ gives the edge rewards.
The algorithm knows $L$ and $n=|R|$ in advance. The vertices of
$R$ arrive one at a time in a uniformly random order; let $r_t$
denote the vertex arriving at time $t\in[n]$. Upon arrival of
$r_t$, the algorithm observes all incident weights
$\{w(u,r_t):u\in L\}$, but not the weights of future arrivals, and
must irrevocably either match $r_t$ to an unmatched offline vertex
or leave it unmatched. We use subscripts $i$ and $j$ for offline
vertices $u_i$ and online vertices $v_j$, respectively.

Among maximum-weight matchings, we choose one of maximum cardinality, subject
to any stated cardinality constraint, and break remaining ties using a fixed
lexicographic order on edge sets, independent of arrival order and consistent
across subinstances.

The objective is to maximize the reward of the resulting matching:
the number of matched weight-$1$ edges in the binary-weighted case,
and the total matched edge weight in the general edge-weighted case.
We measure performance by competitive ratio.

\begin{definition}[Competitive ratio of a maximization problem]
\label{def:CR}
    Given a randomized algorithm $A$, the competitive ratio of $A$ is given by
    \[
        \gamma_A := \min_{\text{Graph } G} \frac{\mathbb{E}[A(G)]}{\OPT(G)}.
    \]
\end{definition}

    Here, $\mathbb{E}[A(G)]$ and $\OPT(G)$ denote the expected value of the objective achieved by the algorithm $A$ and the optimal offline algorithm, respectively, on instance $G$. The expectation is taken over the randomness of the algorithm and the randomness of the arrival order of $G$. To suppress dependence on $G$, we will use $\OPT(G)$ and $\OPT$ interchangeably. Next, we define our notion of incentive compatibility.

\begin{definition}[Arrival-time incentive compatible (IC) matching]\label{def:IC}
Fix a bipartite graph $G$ with $|R|=n$ online vertices. Let $\pi$ be a uniformly random
permutation of $R$, and let $r_t := v_{\pi(t)}$ denote the (random) online vertex arriving at time $t\in[n]$.
A (possibly randomized) online matching algorithm $A$ is \emph{arrival-time incentive compatible (IC)}
if for every instance $G$ and every pair of positions $s,t\in[n]$,
\[
\Pr_{\pi,\;A}\!\big[\, r_s \text{ is matched}\,\big]
=
\Pr_{\pi,\;A}\!\big[\, r_t \text{ is matched}\,\big],
\]
where the probability is over the random arrival order $\pi$ and the internal randomness of $A$.
\end{definition}
Our notion of IC is subtle, but is analogous to the IC secretary problem of \cite{buchbinder2014secretary}. The secretary candidates in \cite{buchbinder2014secretary} are Bayesian agents. Each agent has an identical common uniform prior over their ranks (that is, each agent believes themselves to be the best candidate with probability $1/n$, second best with probability $1/n$, and so on), and evaluates their assigned interview position before knowing their realized rank/type. Under this interpretation, equalizing the ex-ante selection probability across positions removes incentives to manipulate arrival time. The algorithms of \cite{buchbinder2014secretary} do not guarantee IC if agents know their realized rank when evaluating whether to deviate from their interview position (and we believe that constant competitiveness is impossible for that setting). Similarly, we assume agents know the offline graph, but have a uniform prior over their \textit{type}. The type of an agent is defined as the node in $R$ they are mapped to. Definition~\ref{def:IC} equalizes the \textit{ex-ante} service probability across arrival positions under this uniform prior. It \textbf{does not} offer the ex-post guarantee
\[
\Pr_{\pi,\;A}\!\big[\, v_{j_1} \text{ is matched}\, \big]
=
\Pr_{\pi,\;A}\!\big[\, v_{j_2} \text{ is matched}\, \big]
\]
for two different types, \textbf{nor} does it require the stronger type-conditioned arrival time IC property
\[
\Pr_{\pi,\;A}\!\big[\, v \text{ is matched}\, \mid v \text{ arrives at } s \big]
=
\Pr_{\pi,\;A}\!\big[\, v \text{ is matched}\, \mid v \text{ arrives at } t \big]
\]
for every fixed vertex $v \in R$ of the offline graph.

\section{An upper bound in the presence of infeasible edges}\label{sec:lb-forbidden}
The following theorem states that if the bipartite matching graph is not complete, that is, some pairs of matches are infeasible, then no IC algorithm (even randomized) can be constant competitive. This impossibility result motivates our assumption of complete bipartite graphs in the sequel.

\begin{figure}[t]
    \centering
\begin{tikzpicture}[
  scale=1, transform shape,
  every node/.style={font=\scriptsize},
  online/.style  ={circle, draw=blue!60,  fill=blue!20,  thick,
                   minimum size=4.5mm, inner sep=0.5pt},
  offline/.style ={rectangle, draw=red!60, fill=red!20, thick,
                   minimum size=4.5mm, inner sep=0.5pt},
  offlineCenter/.style={rectangle, draw=red!80, fill=red!35, very thick,
                        minimum size=4.5mm, inner sep=0.5pt},
  edge/.style={thick}
]
\def\n{5}
\def\dy{0.62}
\def\xU{0}
\def\xV{1.35}
\def\shift{3.0}
\def\istar{3}

\node at ({(\xU+\xV)/2},0) {$G^{\mathrm{pm}}$};

\foreach \j in {1,...,\n}{
  \node[offline] (pmu\j) at (\xU, -\j*\dy) {$u_{\j}$};
  \node[online]  (pmv\j) at (\xV, -\j*\dy) {$v_{\j}$};
  \draw[edge] (pmu\j) -- (pmv\j);
}

\node at ({\shift+(\xU+\xV)/2},0) {$G_{i}^{\mathrm{star}}$};

\foreach \j in {1,...,\n}{
  \ifnum\j=\istar
    \node[offlineCenter] (staru\j) at (\shift+\xU, -\j*\dy) {$u_i$};
  \else
    \node[offline] (staru\j) at (\shift+\xU, -\j*\dy) {$u_{\j}$};
  \fi
}

\foreach \j in {1,...,\n}{
  \node[online] (starv\j) at (\shift+\xV, -\j*\dy) {$v_{\j}$};
  \draw[edge] (staru\istar) -- (starv\j);
}

\end{tikzpicture}
    \vspace{-0.6em}
    \caption{Lower-bound instances: a perfect matching $G^{\mathrm{pm}}$ and a star $G_i^{\mathrm{star}}$ centered at $u_i$.}
    \label{fig:forbidden-lb}
\end{figure}

\begin{theorem}[No constant competitive ratio with infeasible edges]\label{lem:forbidden-lb}
Fix $n\ge 1$. Consider unweighted online bipartite matching where each arriving online vertex reveals its feasible incident edges on arrival, and the goal is to maximize the size (cardinality) of the matching. Let ${A}$ be any (randomized) algorithm that is arrival-time
incentive compatible in the sense of Definition~\ref{def:IC}.
Then there exists a balanced bipartite graph $G$ with $|L|=|R|=n$ such that $\OPT(G) \ge n \cdot \E[|A(G)|]$.
In particular, no incentive-compatible algorithm can achieve a constant competitive ratio on
general (non-complete) bipartite graphs even in the unweighted setting.
\end{theorem}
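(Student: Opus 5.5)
Consider the $n+1$ instances of Figure~\ref{fig:forbidden-lb}: the perfect matching $G^{\mathrm{pm}}$ with $\OPT=n$, and the stars $G_i^{\mathrm{star}}$ for $i\in[n]$, each with $\OPT=1$. The plan is to show that if $A$ is IC then on one of these instances $\E[|A(G)|]\le \OPT(G)/n$.

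\textbf{Step 1 (arrival-time IC on $G^{\mathrm{pm}}$).} Let $p_t := \Pr[r_t \text{ is matched}]$ on $G^{\mathrm{pm}}$. By Definition~\ref{def:IC}, $p_t = p$ for all $t$, so $\E[|A(G^{\mathrm{pm}})|] = np$. If $p\le 1/n$, then $\E[|A|]\le 1 = \OPT/n$ and we are done.

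\textbf{Step 2 (indistinguishability at time $1$).} Since $A$ only sees edges incident to arrived vertices, consider the first arrival. In $G^{\mathrm{pm}}$ the first arrival $r_1=v_j$ has the single edge $(u_j,v_j)$. In $G^{\mathrm{star}}_i$, if $r_1=v_j$ then $r_1$ likewise has a single edge to $u_i$. Conditioned on the revealed edge being to $u_i$, the views coincide: in $G^{\mathrm{pm}}$ (with $r_1=v_i$) and in $G^{\mathrm{star}}_i$ (with any $r_1$) the algorithm sees one online vertex adjacent only to $u_i$. Vertex labels of online nodes carry no information beyond their edges, or we can assume the identities are relabeled/anonymous. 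Hence the algorithm's decision to match the first arrival has the same probability $q_i$ in both cases, and $p_1=\frac1n\sum_i q_i$.

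\textbf{Step 3 (the star forces few matches at later times).} In $G_i^{\mathrm{star}}$ at most one vertex is matched overall, so $\sum_t \Pr[r_t \text{ matched}] \le 1$. By IC, each position has probability at most $1/n$, in particular the first position has probability at most $1/n$. But position $1$ is matched with probability exactly $q_i$ by Step 2. Thus $q_i\le 1/n$ for every $i$, giving $p=p_1\le 1/n$ on $G^{\mathrm{pm}}$. Therefore $\E[|A(G^{\mathrm{pm}})|]\le 1$ while $\OPT(G^{\mathrm{pm}})=n$, which proves the claim with $G=G^{\mathrm{pm}}$.

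\textbf{Main obstacle.} The delicate step is Step 2: making rigorous that the algorithm's first decision depends only on the observed edge set of $r_1$. I would first formalize the online model so that the state at time $1$ is exactly the set $\{u: (u,r_1)\text{ feasible}\}$ together with $A$'s internal randomness. If names of online vertices were visible, I would instead apply the argument to the instances with online labels randomly permuted, or note that in $G_i^{\mathrm{star}}$ we may take the arrival $v_i$ specifically (an event of probability $1/n$). That alternative costs a factor, so I would prefer to state the anonymity assumption implicit in ``reveals its feasible incident edges on arrival''.
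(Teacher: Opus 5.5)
Under the anonymity assumption you adopt, your argument is correct and is a special case of the paper's proof. The theorem, however, is for \emph{every} IC algorithm, and the paper's model does not grant anonymity. Its proof explicitly allows the algorithm to use vertex labels, and its own algorithms break ties by a fixed lexicographic order on edge sets.

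Without anonymity, your Step 2 fails, and the fixed identity instance $G^{\mathrm{pm}}$ is not hard at all. Consider the algorithm that matches an arriving $v_j$ to $u_j$ whenever the edge $(u_j,v_j)$ is present, and rejects it otherwise. Since $u_j$ can only ever be claimed by $v_j$, whether $v_j$ is matched is deterministic and independent of its arrival position. So this algorithm is IC on every instance, yet it is optimal on $G^{\mathrm{pm}}$.

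Your second fallback, looking only at the arrival $v_i$ in $G_i^{\mathrm{star}}$, does not merely cost a factor; it gives nothing. Write $a_{ij}:=\Pr[r_1\text{ matched on }G_i^{\mathrm{star}}\mid r_1=v_j]$. IC on the star only bounds the row average $p_i:=\frac1n\sum_j a_{ij}\le\frac1n$. The diagonal entry $a_{ii}$ can therefore equal $1$, as it does for the algorithm above.

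The missing step is the one you only gesture at: let the hard instance depend on $A$ through a relabeling. For a permutation $\sigma$, let $G^{\mathrm{pm}}_\sigma$ have edges $(u_{\sigma(j)},v_j)$.
\begin{itemize}
\item A first arrival $v_j$ on $G^{\mathrm{pm}}_\sigma$ is in exactly the same information state, labels included, as $v_j$ arriving first on $G^{\mathrm{star}}_{\sigma(j)}$.
\item Hence the first-arrival match probability on $G^{\mathrm{pm}}_\sigma$ is $\frac1n\sum_j a_{\sigma(j),j}$.
\item Averaging over a uniformly random $\sigma$ gives $\frac1{n^2}\sum_{i,j}a_{ij}=\frac1n\sum_i p_i\le\frac1n$.
\item So some $\sigma$ has first-arrival match probability at most $1/n$, and IC on that $G^{\mathrm{pm}}_\sigma$ finishes the proof exactly as in your Step 3.
\end{itemize}
In your anonymous setting $a_{ij}=q_i$ does not depend on $j$, so every $\sigma$ works. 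That is why the averaging was invisible to you.
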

We defer the full proof to Appendix~\ref{sec:lb-forbidden-proof}, but the main idea is to look at the `perfect matching' and `star' instances shown in Figure~\ref{fig:forbidden-lb}. IC on any star bounds the first-arrival match probability by $1/n$. At the first arrival, a vertex with neighbor $u_i$ is indistinguishable from the same vertex on the star centered at $u_i$. Averaging over labelings of the perfect-matching instance therefore gives an instance $G^{\mathrm{pm}}$ whose first-arrival match probability is at most $1/n$. By IC, every arrival position on this instance has match probability at most $1/n$, so the expected reward is at most $1$, while $\OPT(G^{\mathrm{pm}})=n$.

\section{Balanced Complete Graphs}
\label{sec:balanced_wt}

In this section we consider the special case where the cardinalities of the offline and online sides of the bipartite graph are equal, deferring the general case to Section~\ref{sec:edge_wt} (which subsumes the current section). The balanced setting allows isolating some of the ideas used for the general case in a simpler setting, and also allows us to present some of the relevant ideas from \cite{kesselheim2013optimal}.
Our main result of this Section is the following theorem.
\begin{theorem}\label{thm:balanced}
    Given a weighted bipartite graph with $|L|=|R|=n$, Algorithm~\ref{alg:balanced} produces an incentive compatible matching with an asymptotic competitive ratio of $\approx 0.162$.
\end{theorem}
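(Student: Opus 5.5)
The approach is to verify the two claimed properties of Algorithm~\ref{alg:balanced} separately: first incentive compatibility (an exact combinatorial statement), then the competitive ratio (a Kesselheim-style expectation argument). I read the algorithm as follows. The arrival sequence is cut into consecutive equal-size blocks. For each arrival $r_t$ we form a tentative edge $e_t=(u,r_t)$, namely the edge incident to $r_t$ in the (tie-broken) maximum-weight matching on the previously revealed blocks together with $r_t$ alone. Inside each block we run the IC secretary rule of \cite{buchbinder2014secretary} on the weights $w(e_t)$, so that at most one arrival per block is selected and every position of the block is selected with the same probability $p$. If the selected arrival's tentative partner $u$ is already taken, it is matched instead to some free offline vertex. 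Such a fallback partner always exists because $|L|=|R|$ and at most one vertex is matched per block, and the fallback edge is feasible because $G$ is complete, though it may have weight $0$.

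\textbf{IC.} The fallback makes the event ``$r_t$ is matched'' equal to the event ``$r_t$ is selected by the block secretary rule''. Selection depends only on the relative order of the tentative weights inside the block. I will show that, conditioned on the set of vertices in earlier blocks, the tentative weights of the vertices in the current block are exchangeable under the uniform random order. This holds because each candidate is evaluated against the earlier blocks plus itself only, and not against the earlier vertices of its own block. Hence the secretary rule sees a uniformly random order. Its per-position selection probability $p$ is then the same for all positions within a block, and it is the same across blocks since the block rule is identical. This gives Definition~\ref{def:IC} directly.

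\textbf{Competitiveness.} I will lower bound the expected reward block by block, using three ingredients.
\begin{itemize}
\item[(a)] \emph{Tentative-edge weight.} For an arrival in block $\ell$ with prefix set $S$, the random-last-element argument of \cite{kesselheim2013optimal} gives $\E[w(e_t)]\ge \OPT(S\cup\{r_t\})/(|S|+1)\ge \OPT/n$. This uses that the optimum on a random $s$-subset is at least $(s/n)\OPT$ in expectation.
\item[(b)] \emph{Secretary value.} The secretary rule collects, in expectation, a constant fraction of the block's best tentative weight, and hence at least a constant times the block's average tentative weight.
\item[(c)] \emph{Availability.} The tentative partner $u$ of the selected vertex is still free with constant probability, conditioned on the set revealed so far.
\end{itemize}
Multiplying (a), (b) and (c) and summing over blocks gives $c\cdot\OPT$. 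The remaining step is to optimize the block size and the secretary parameters numerically to reach $\approx 0.162$.

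\textbf{Main obstacle: (c).} The difficulty is that vertices can also be consumed by fallback matches, and that the events are correlated with the weights. My first attempt will bound the probability that $u$ was matched in an earlier block $\ell'$. Through its tentative edge, this happens only if $u$ is the partner of the block-$\ell'$ selectee. By the random-last argument, that has probability at most $1/(\text{number of revealed vertices})$ times the selection probability. For fallback matches, I will choose the fallback vertex uniformly among the free vertices, so any fixed $u$ is hit with probability at most $1/(\#\text{free})$. Since at most one vertex is consumed per block, the number of free vertices stays at least $n$ minus the number of blocks. The resulting product over earlier blocks, of the form $\prod_{\ell'<\ell}(1-O(1/\ell'))$, is then bounded below by a constant when the selection probability per block is a small constant. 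This keeps the bound uniform in $\ell$, which is exactly what IC forbids a sampling phase from achieving by other means. If this bound turns out too weak, I will instead condition on the revealed set and use Kesselheim's independence of successive tentative endpoints.
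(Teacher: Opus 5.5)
\textbf{Verdict: there is a genuine gap.} You analyze a different algorithm from Algorithm~\ref{alg:balanced}, and your availability step (c) fails for the algorithm you describe.

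\textbf{The algorithm.} Algorithm~\ref{alg:balanced} has no blocks and no secretary subroutine. It matches each of the first $T-1$ arrivals to the highest-ranked available vertex of a pre-sampled random permutation $\rho$ of $L$. Each later arrival $r_t$ goes to the endpoint $\ell_t$ of its edge in the optimal cardinality-$t$ matching on $r_1,\dots,r_t$ if $\ell_t$ is free, and otherwise to the highest-ranked free vertex of $\rho$. Every arrival is matched, so the output is a perfect matching and IC is immediate: every position is matched with probability $1$. Your exchangeability argument is the one for the block algorithm of Section~\ref{sec:edge_wt}; it is valid there but not needed here. Your step (a) and your uniform-fallback idea do match the paper's ingredients.

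\textbf{Why (c) fails.} The missing idea is exactly the initial phase that you believe IC rules out. IC forbids leaving early arrivals unmatched, not learning from them. On a complete graph, the first $T-1\approx\alpha n$ arrivals can be matched to random, possibly weight-$0$, partners.

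Without such a phase, your product $\prod_{\ell'<\ell}(1-c/\ell')$ starts at $\ell'=1$ and decays like $\ell^{-c}$. It is not bounded below by a constant for any fixed $c>0$. Multiplying by the selection probability is also unjustified, because whether a candidate is selected is correlated with who its tentative partner is.

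In fact, your algorithm is not constant-competitive. Take $w(u_1,v_j)=\epsilon^j$ and all other weights $0$.
\begin{itemize}
\item Every candidate in the first block has tentative partner $u_1$. So $u_1$ is consumed in the first block with constant probability, by an essentially random arrival.
\item Afterwards, $u_1$ is consumed with probability bounded below by a constant in every block that contains a new record.
\item Hence the probability of ever matching $u_1$ to its best partner behaves like $n^{-c}$ and tends to $0$.
\end{itemize}

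\textbf{What the paper does instead.} With the initial phase, Kesselheim's backward induction starts at $T-1$. It gives $\Pr[u\notin e_s \text{ for all } T\le s\le t\mid r_{t+1},\dots,r_n]\ge (T-1)/t$. Separately, the shared permutation $\rho$ gives probability at least $1-t/n$ that $u$ has not been consumed by a random or fallback match by time $t$, since $u$ must lie among the first $t$ vertices of $\rho$ for that to happen. Summing $\frac{\OPT}{n}\sum_{t=T}^{n}\frac{T-1}{t}\bigl(1-\frac{t}{n}\bigr)$ gives $\alpha\bigl(\log\frac1\alpha+\alpha-1\bigr)$ with $\alpha=T/n$. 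This is maximized at $\alpha\approx 0.203$, with value $\approx 0.162$.

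Your plan to reach $0.162$ by numerically tuning block size and secretary parameters cannot produce this constant. The constant comes from the sampling fraction $\alpha$, which is a parameter your algorithm does not have.
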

The key idea is to guarantee IC by ensuring that the algorithm always outputs a perfect matching. Thus, the problem reduces to finding a perfect matching in the random-order arrival model with a good competitive ratio. Our algorithm extends the $1/e$-competitive online edge-weighted, but not IC, matching algorithm of \cite{kesselheim2013optimal}. Recall that $r_t$ is the $t$th arriving online vertex. The algorithm first matches the first $T-1$ online vertices uniformly at random (we optimize $T$ later). For each subsequent arrival $r_t$, it computes an optimal offline matching $M_t^\star$ of cardinality $t$ in the graph revealed so far, between $L$ and ${r_1,\ldots,r_t}$. Let $e_t=(\ell_t,r_t)$ be the edge of $M_t^\star$ incident to $r_t$, which exists since $|L|=|R|$. If $\ell_t$ is unmatched, we match $\ell_t$ to $r_t$; otherwise, we match $r_t$ to a uniformly random unmatched offline vertex, denoted by $f_t=(q_t,r_t)$. To implement this uniform sampling, we pre-sample a permutation $\rho$ of $L$ before arrivals begin and, whenever a random unmatched offline vertex is needed, choose the highest-ranked available vertex under $\rho$. We also write $f_t$ for the random edge used when $t<T$, and set $f_t=\emptyset$ whenever no random match is made.

\begin{algorithm}[ht]
\caption{Online Matching on a balanced graph}
\KwIn{Bipartite graph $(L,R)$ with $|L|=|R|$, prefix length $T$}
Sample a random permutation $\rho$ of $L$\;

\For{$t = 1$ \KwTo $T-1$}{
    Match $r_t$ to the highest-ranked available vertex in $L$ according to $\rho$\;
}

\For{$t = T$ \KwTo $|R|$}{
    Compute an optimal offline matching $M_t^\star$ between $L$ and $\{r_1, \ldots, r_t\}$\;
    Let $e_t = (\ell_t, r_t)$ be the edge incident to $r_t$ in $M^\star_t$ (called a `\textit{tentative}' edge)\;

    \eIf{$\ell_t$ is unmatched}{
        Match $\ell_t$ to $r_t$\;
    }{
        Choose $f_t=(q_t,r_t)$ by selecting $q_t$ as the highest-ranked available offline vertex according to $\rho$\;
        Match $r_t$ to $q_t$\;
    }
}\label{alg:balanced}
\end{algorithm}

\begin{proof}[Proof of Theorem \ref{thm:balanced}]
 The analysis follows from Lemmas \ref{lem:edge_avail_kesselheim}-\ref{lemma:low_prob_ran_match}.  Lemma~\ref{lem:edge_avail_kesselheim} shows that the ``tentative'' edge $e_t$ captures a reasonable fraction of the weight of $\OPT$ in expectation. However, the online algorithm might be unable to match $e_t$ if the offline vertex $\ell_t \in e_t$ was already matched. Lemmas~\ref{lemma:low_prob_greedy_match} and \ref{lemma:low_prob_ran_match} show that the offline vertex $\ell_t$, and in fact every $u \in L$, has at least an $\Omega(1)$ probability of being available at time step $t$. Lemma~\ref{lemma:low_prob_greedy_match} shows that the probability that an arbitrary offline vertex $u$ is not picked as a `tentative' edge by time $t$ is uniformly lower bounded (uniform when conditioning over the sequence of future online arrivals). The proof of this lemma carries over directly from the work of \cite{kesselheim2013optimal} but is provided here for completeness. Finally, unlike the algorithm in \cite{kesselheim2013optimal}, even though an offline vertex $u$ might not get matched as a tentative edge, it might still end up getting matched as part of an edge $f_s$ during random sampling. Lemma~\ref{lemma:low_prob_ran_match} upper bounds this probability.

\begin{lemma}
\label{lem:edge_avail_kesselheim}
The expected weight of the ``tentative edge'' $e_t$ for any $T \leq t \leq n$ is at least $\OPT/n$.
\end{lemma}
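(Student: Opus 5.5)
The plan is the standard random-order argument of \cite{kesselheim2013optimal}: condition on the set of the first $t$ arrivals, then use the symmetry of which of them arrives last. Fix $t$ with $T\le t\le n$ and condition on the unordered set $S_t=\{r_1,\dots,r_t\}$. Under a uniformly random arrival order, this set is a uniformly random $t$-subset of $R$. Given $S_t$, the last vertex $r_t$ is uniform over $S_t$. The matching $M_t^\star$ is an optimal matching between $L$ and $S_t$, chosen of maximum cardinality $t$, with ties broken lexicographically and independently of the order. So $M_t^\star$ depends only on the set $S_t$, not on the order within it. Since $|L|=n\ge t$ and the graph is complete, $M_t^\star$ saturates $S_t$, and every $r\in S_t$ has a unique incident edge in $M_t^\star$. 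It follows that
\[
\E\bigl[w(e_t)\mid S_t\bigr]=\frac{1}{t}\sum_{r\in S_t} w\bigl(M_t^\star(r)\bigr)=\frac{w(M^\star(S_t))}{t},
\]
where $M^\star(S_t)$ is the optimal matching on $L\times S_t$.

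Next I lower bound $w(M^\star(S_t))$. Let $M^\star$ be an optimal matching of the full instance, and assume without loss of generality that it is perfect, since the graph is complete with nonnegative weights. Restricting $M^\star$ to the edges whose online endpoint lies in $S_t$ gives a feasible matching on $L\times S_t$. Therefore
\[
w(M^\star(S_t))\ \ge \sum_{r\in S_t} w\bigl(M^\star(r)\bigr).
\]
The cardinality requirement in the definition of $M_t^\star$ does not hurt here. The restricted matching already has cardinality $t$, and an optimum of cardinality $t$ exists with at least this weight. Taking expectation over the uniform $t$-subset, each $r\in R$ lies in $S_t$ with probability $t/n$, so
\[
\E\bigl[w(M^\star(S_t))\bigr]\ge \frac{t}{n}\,\OPT.
\]

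Combining the two displays gives $\E[w(e_t)]\ge \frac1t\cdot\frac tn\OPT=\OPT/n$, as claimed. The main point needing care is the first step. I must justify that $M_t^\star$, including its cardinality constraint and tie-breaking, is a function of the set $S_t$ alone, so that averaging over which element of $S_t$ arrived last is legitimate. The fixed lexicographic tie-breaking rule in the Preliminaries is exactly what guarantees this. The rest is linearity of expectation.
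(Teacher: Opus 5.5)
Your proposal is correct and follows essentially the same argument as the paper. You condition on the unordered set of the first $t$ arrivals, use that $r_t$ is uniform within it to get $\E[w(e_t)\mid S_t]=w(M_t^\star)/t$, and bound $\E[w(M_t^\star)]\ge t\,\OPT/n$ by restricting an optimal matching to that random subset; your explicit checks that $M_t^\star$ depends only on the set (via the fixed tie-breaking) and that the cardinality-$t$ requirement costs nothing are details the paper leaves implicit.
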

\begin{proof} This lemma directly follows from Kesselheim et al.~\cite{kesselheim2013optimal}. Let $V_t \subseteq R$ be the \textit{unordered} set of the vertices $\{r_1,\ldots, r_t\}$ revealed so far. The expected total weight of the matching $M_t^\star$ between $L$ and $V_t$ is at least $\OPT\cdot t/n$ since $V_t$ is a random subset of size $t$ of $R$. Conditioned on $V_t$, $r_t$ is a uniform sample from $V_t$ and therefore contributes in expectation exactly $w(M_t^\star)/t$.
\end{proof}

\begin{lemma}\label{lemma:low_prob_greedy_match}
    Fix any arbitrary offline vertex $u \in L$, and an arbitrary time $ t \in \{T-1,\ldots, n\}$. Then,
    \[
        \mathbb{P}\left[ u \notin e_s \text{ for all }  T \leq s \leq t  \mid r_{t+1}, \ldots, r_n\right] \geq \frac{T-1}{t}.
    \]
\end{lemma}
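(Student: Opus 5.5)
The argument follows the backward-induction proof of Kesselheim et al.~\cite{kesselheim2013optimal}. The key observation is that $e_s$ is a deterministic function of the \emph{unordered} set $V_s=\{r_1,\dots,r_s\}$ together with the identity of the last arrival $r_s$. This holds because $M_s^\star$ depends only on $V_s$, thanks to the fixed lexicographic tie-breaking that is independent of arrival order. Conditioning on $r_{t+1},\ldots,r_n$ fixes $V_t$ as a set. Under the uniform random order, the arrangement of $V_t$ in positions $1,\dots,t$ is still a uniformly random permutation. Hence, for each $s\le t$, conditioned on the later arrivals $r_{s+1},\dots,r_n$ (equivalently on $V_s$ and the suffix), $r_s$ is uniform over $V_s$.

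\textbf{Step 1 (per-step bound).} Fix $s$ with $T\le s\le t$ and condition on $r_{s+1},\dots,r_n$. The matching $M_s^\star$ is then fixed, and $u$ is incident to at most one edge of it, say to the vertex $v\in V_s$. So $u\in e_s$ exactly when $r_s=v$, which has conditional probability at most $1/s$. If $u$ is unmatched in $M_s^\star$, the probability is $0$. Therefore
\[
\Pr\left[u\notin e_s \mid r_{s+1},\dots,r_n\right]\ \ge\ 1-\tfrac1s=\tfrac{s-1}{s}.
\]

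\textbf{Step 2 (chaining).} The event $\{u\notin e_{s'}\}$ for $s'>s$ is determined by $r_{s'},\dots,r_n$ (via $V_{s'}$ and $r_{s'}$), so it is measurable with respect to $r_{s+1},\dots,r_n$. The tower property then gives, by downward induction from $s=t$ to $s=T$,
\[
\Pr\left[\bigcap_{s=T}^{t}\{u\notin e_s\}\,\middle|\, r_{t+1},\dots,r_n\right]\ \ge\ \prod_{s=T}^{t}\frac{s-1}{s}=\frac{T-1}{t}.
\]
Concretely, at each stage we condition on $r_{s+1},\dots,r_n$, on which the later events are already fixed, and multiply by the bound from Step 1. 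For $t=T-1$ the intersection is empty and the bound $1$ holds trivially.

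\textbf{Main obstacle.} The delicate point is Step 1's claim that $r_s$ is uniform on $V_s$ given the suffix, and that $M_s^\star$ (hence $v$) does not depend on the order within $V_s$. This requires the order-independent tie-breaking convention from the Preliminaries and the fact that the random permutation $\rho$ used by the algorithm plays no role in determining $e_s$. Once $e_s$ is shown to depend only on $(V_s,r_s)$, the rest is the telescoping product above.
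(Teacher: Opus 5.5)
Your proof is correct and is essentially the paper's argument. The paper inducts on $t$: it conditions on the suffix so that $M_t^\star$ is fixed, and uses that $r_t$ misses the unique $M_t^\star$-partner of $u$ with probability at least $(t-1)/t$, which is your Step~1 bound; your tower-property chaining in Step~2 then gives the same telescoping product $\prod_{s=T}^{t}\frac{s-1}{s}=\frac{T-1}{t}$. Your explicit observation that $e_s$ depends only on $(V_s,r_s)$, because of the order-independent tie-breaking, is what the paper uses implicitly when it says that conditioning on the future arrivals fixes $M_t^\star$.
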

\begin{proof} The proof proceeds by induction, the claim being vacuously true for $t=T-1$. Assume the result is true up to $t-1$, and we aim to prove it for $t$. Conditioning on the sequence of future arrivals $r_{t+1}, \ldots, r_n$ fixes the subset $V_t \subseteq R$ of the first $t$ arrivals, which is sufficient to fix the matching $M_t^\star$, but leaves the arrival order of the first $t$ online vertices as random.
If $u$ is unmatched in $M_t^\star$, the claim for time $t$ is immediate.
Otherwise, let $(u,v^\star)$ be its incident edge in $M_t^\star$, that is, the offline vertex $u$ is potentially part of a `tentative' matched edge. If $r_t \neq v^\star$, then $u$ is not an endpoint of the tentative edge at time $t$, and this event happens with probability at least $(t-1)/t$.
Hence,
\begin{align*}
 &     \mathbb{P}\left[ u \notin e_s \text{ for all }  T \leq s \leq t  \mid r_{t+1}, \ldots, r_n\right] \\
 &= \sum_{v \in V_t} \mathbf{1}_{\{v \neq v^\star \}} \mathbb{P}\left[ r_t = v\right] \mathbb{P}\left[ u \notin e_s \text{ for all }  T \leq s \leq t-1  \mid r_t = v, r_{t+1}, \ldots, r_n\right] \\
 & \geq (t-1) \cdot \frac{1}{t}\cdot \frac{T-1}{t-1} \ = \ \frac{T-1}{t}.
\end{align*}
\end{proof}

\begin{lemma}\label{lemma:low_prob_ran_match}
    Fix an arbitrary offline vertex $u \in L$, and an arbitrary time $t \in \{T-1, \ldots, n\}$. Then,
    \[
        \mathbb{P}\left[ u \notin f_s \mbox{ for all } 1 \leq s \leq t  \mid u \notin e_s \mbox{ for }  T \leq s \leq t , \left( r_{t+1}, \ldots, r_n \right) \right] \geq 1-\frac{t}{n}.
    \]
\end{lemma}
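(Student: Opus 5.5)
The idea is to exploit the random permutation $\rho$: the random fallback edges $f_s$ always pick the highest-$\rho$-ranked available offline vertex. We will argue that $u$'s rank under $\rho$ is essentially independent of everything else we condition on, so $u$ is unlikely to be near the top of $\rho$.

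\textbf{Step 1: the conditioning does not involve $\rho$.} The conditioning event (future arrivals $r_{t+1},\dots,r_n$ and $u\notin e_s$ for $T\le s\le t$) depends only on the arrival order $\pi$. The matchings $M_s^\star$ and hence the tentative edges $e_s$ are functions of $\pi$ alone, so they are independent of $\rho$. We therefore condition on a full realization of $\pi$ satisfying the event. With $\pi$ fixed, the tentative endpoints $\ell_T,\dots,\ell_t$ are deterministic, and $u$ is not among them. The only randomness left is $\rho$, which is still uniform.

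\textbf{Step 2: a coupling claim.} We claim that, with $\pi$ fixed, $u$ is taken by some $f_s$ with $s\le t$ only if fewer than $t$ vertices precede $u$ in $\rho$. Equivalently, $u$ must lie among the top $t$ positions of $\rho$.

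To see this, note that at most $t$ random draws occur by time $t$, one per arrival. Each draw picks the highest-ranked available vertex. Suppose $u$ is picked at draw $s$. Then every vertex ranked above $u$ in $\rho$ was already matched before time $s$, either by an earlier $f$ or by an earlier tentative match. At most $s-1\le t-1$ vertices have been matched before time $s$. Hence $u$ has at most $t-1$ vertices above it, so $\rho$-rank$(u)\le t$.

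\textbf{Step 3: conclude.} Since $\rho$ is uniform and independent of $\pi$, the rank of $u$ is uniform on $[n]$. So $\Pr[\rho\text{-rank}(u)\le t]=t/n$, and the failure probability is at most $t/n$. Averaging over all $\pi$ consistent with the conditioning gives the stated bound $1-t/n$.

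\textbf{Main obstacle.} The delicate point is Step 2: the set of available vertices depends on $\rho$ itself, through earlier random picks. Dominating everything by the crude count "at most $s-1$ vertices matched before time $s$" sidesteps this dependence, and that count is all the argument needs. We also have to check that $u\notin e_s$ is genuinely a function of $\pi$ alone. This holds because the lexicographic tie-breaking is fixed, so $M_s^\star$ is determined by the revealed set and does not depend on the algorithm's matching state.
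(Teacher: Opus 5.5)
Your proposal is correct and follows essentially the same route as the paper's proof. Both arguments rest on three facts: $\rho$ is independent of the arrival order, and hence of the tentative edges; at most $t$ offline vertices are matched by time $t$, so a fallback can reach $u$ only if $u$ is among the top $t$ positions of $\rho$; and $u$'s rank is uniform on $[n]$. Your write-up is just more explicit, since you condition on the full arrival permutation $\pi$ and spell out the rank-counting step.
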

\begin{proof}
Recall that the same uniformly random permutation $\rho$ is used for all initial and fallback matches. Whenever a random match is needed, the algorithm picks the highest-ranked available offline vertex according to $\rho$. By time $t$ at most $t$ offline vertices have been matched. Thus, if $u$ is not among the first $t$ vertices in $\rho$ and is not a tentative endpoint at any time $s\in\{T,\ldots,t\}$, then it is still available at the end of time $t$. The permutation $\rho$ is independent of the arrival order and of the tentative edges, so the conditional probability that $u$ is outside these first $t$ positions is $1-t/n$.
\end{proof}

We now put together the above Lemmas. Define $\alpha:=\frac{T}{n}$. Let $w_t$ denote the weight of the tentative edge at time $t$. Conditioning on $r_t,\ldots,r_n$ fixes $w_t$ and $\ell_t$. Applying the preceding lemmas at time $t-1$ and weakening the resulting availability bound gives the following lower bound on the overall expected weight:
\begin{align*}
    &\sum_{t=T}^n \mathbb{E}\left[ w_t \cdot \mathbf{1}_{\{ \ell_t \text{ available}\}} \right] \\*
    & =  \sum_{t=T}^n \mathbb{E}\left[ \mathbb{E} \left[ w_t \cdot \mathbf{1}_{\{ \ell_t \text{ available}\}} \mid r_t, \ldots, r_n \right] \right] \\
    & \geq \sum_{t=T}^n \mathbb{E}\left[ \mathbb{E} \left[ w_t \mid r_t, \ldots, r_n \right] \frac{T-1}{t}\cdot \left( 1 - \frac{t}{n}\right) \right] \\
    & = \sum_{t=T}^n \mathbb{E}\left[ w_t \right] \frac{T-1}{t}\cdot \left( 1 - \frac{t}{n}\right) \geq \  \frac{\OPT}{n} \sum_{t=T}^n \frac{T-1}{t}\cdot \left( 1 - \frac{t}{n}\right) \\
    & = \OPT \left( \alpha \int_{x=\alpha}^1 \frac{1-x}{x} dx + O(1/n)\right) \ = \ \OPT \left( \alpha(\log \frac{1}{\alpha} + \alpha - 1) + O(1/n)\right).
\end{align*}
The final expression is maximized for $\alpha^\star \approx 0.203$ giving a competitive ratio of $\approx 0.162 - O(1/n)$.
\end{proof}

\section{Online Matching for Edge-Weighted Complete Graphs}
\label{sec:edge_wt}

We now treat the general imbalanced case, where $m=|L|$ and $n=|R|$ need not be
equal. The balanced
algorithm of Section~\ref{sec:balanced_wt} guaranteed IC
by producing a perfect matching, but this is impossible when $n\gg m$. We
instead partition the arrival sequence into blocks and enforce
IC locally within each block. We describe and analyze the algorithm for
$n=km$, where the blocks have equal size and the same block-level match
probability. For arbitrary $n$, Appendix~\ref{app:padding} gives a padding
construction that preserves arrival-time IC for the original online vertices.
Our main contribution is Algorithm~\ref{alg:ic_edge_wt}, and the main result is summarized in Theorem~\ref{thm:ic_edge_wt} below.

Let $p_k$ denote the probability that a Binomial random variable with $k$ trials and mean 1 takes a value of 1; $q_k$ be the probability that the IC secretary
algorithm of Buchbinder et al.~\cite{buchbinder2014secretary} selects the best of $k$
candidates under its optimal parameter choice, and let $y_k$ be the
corresponding probability that it selects \textit{some} (not necessarily the best) candidate. Define
\[
f(\alpha;\gamma)
=
\alpha\left(-\log \alpha-\gamma(1-\alpha)\right),
\qquad
\alpha_k\in\arg\max_{\alpha\in[0,1]} f(\alpha;y_k),
\qquad
c_k=p_k\cdot q_k\cdot f(\alpha_k;y_k)
.
\]

\begin{theorem}
\label{thm:ic_edge_wt}
\label{thm:ic_edge_wt2}
Algorithm~\ref{alg:ic_edge_wt}, with the padding construction of
Appendix~\ref{app:padding} when necessary, is arrival-time IC under
random-order arrivals for arbitrary $m,n\ge 1$. For $|L|=m$ and $|R|=km$,
where $k$ is a positive integer, its competitive ratio is at least
\[
c_k-O\!\left(1/\sqrt{m}\right).
\]
The values of $p_k, q_k, y_k, c_k$ for small $k$ as well as their asymptotic limits as $k\to \infty$ are:
\[
\begin{array}{c|cccc}
\hline
k & 1 & 2 & 3 & k\to\infty\\
\hline
p_k & 1 & 0.5 & 0.444 & 1/e\\
q_k & 1 & 0.5 & 0.444 & 1-1/\sqrt{2}\\
y_k & 1 & 1 & 1 & 1/\sqrt{2}\\
c_k & 0.162 & 0.041 & 0.032 & 0.023\\
\hline
\end{array}
\]
\end{theorem}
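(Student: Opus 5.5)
Let me define Algorithm~\ref{alg:ic_edge_wt} from the intro description, since the statement refers to it without spelling it out. The $n=km$ arrivals are split into $m$ consecutive blocks $B_1,\dots,B_m$ of size $k$. The first $\tau-1\approx\alpha m$ blocks form a ``sampling'' phase. In each of these blocks we run the IC secretary rule of Buchbinder et al.~\cite{buchbinder2014secretary} on random labels, and match the selected vertex to the highest-ranked available offline vertex under a pre-sampled permutation $\rho$. In each later block $B_b$, for every arrival $r\in B_b$ we compute an optimal matching between $L$ and $B_1\cup\dots\cup B_{b-1}\cup\{r\}$. Its edge at $r$ is the tentative edge $e_r=(\ell_r,r)$. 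We feed the weights $w(e_r)$ to the IC secretary subroutine. If it selects $r$, we match $r$ to $\ell_r$ when $\ell_r$ is free, and otherwise to the $\rho$-highest available vertex.

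\textbf{IC.} Within a block the arrivals are exchangeable. Because each weight is computed against the prior blocks plus that single candidate, the weights form an exchangeable sequence, so the subroutine's IC guarantee makes the selection probability equal across positions within the block. Every block matches exactly one vertex with probability $y_k$. The fallback ensures a selected vertex is always matched: at most $m$ blocks and $m$ offline vertices mean an available vertex always exists. So the per-position match probability is $y_k/k$ everywhere. For general $n$, the padding of Appendix~\ref{app:padding} is invoked.

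\textbf{Value.} Fix a block $b\ge\tau$. By the argument of Lemma~\ref{lem:edge_avail_kesselheim}, $B_1\cup\dots\cup B_{b-1}\cup\{r\}$ is a uniform random set of size $(b-1)k+1$. The per-candidate expected tentative weight is therefore about $\OPT\cdot((b-1)k+1)/(n\,((b-1)k+1))$. The heavy edges of the scaled optimum contribute about $\OPT/m$ per block.

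To obtain the factor $p_k q_k$, I would argue as follows. The per-block contribution is the maximum tentative weight in the block times $q_k$, since the subroutine selects the best with probability $q_k$. To lower-bound the maximum, consider a fixed OPT edge $(u,v)$. With probability about $p_k$, the block contains exactly one ``representative'' of the relevant kind. I would mimic Kesselheim's sampling: it contributes a weight whose expectation over the block is at least $\OPT/m$ times the probability $p_k$ that exactly one element of the block is ``new''. Making this rigorous is where the $O(1/\sqrt m)$ error enters. Concentration of the number of OPT-matched vertices in the revealed prefix gives deviations of $O(\sqrt m)$, so the ratio errors are $O(1/\sqrt m)$.

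\textbf{Availability (main obstacle).} We need that $\ell_r$ is free with probability at least about $(\tau-1)/(b-1)$ times a factor for random fallback matches. The random-match part mirrors Lemma~\ref{lemma:low_prob_ran_match}. Each earlier block issues a $\rho$-match with probability at most $y_k$, and only in the non-tentative case. Integrating gives the $-\gamma(1-\alpha)$ term with $\gamma=y_k$, yielding $f(\alpha;y_k)=\alpha(\log(1/\alpha)-y_k(1-\alpha))$.

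The tentative part generalizes Lemma~\ref{lemma:low_prob_greedy_match} by backward induction over blocks, conditioning on later blocks. The difficulty is that $u$ can be a tentative endpoint for several candidates in one block, and only the selected one matters. I would use amortization. Conditional on the unordered union of the first $b$ blocks, the partition into the last block is uniform. Vertex $u$ is consumed in block $b$ only if the single selected candidate's tentative edge hits $u$. Averaging over which $k$-subset forms block $b$ and over the within-block order, the probability of this is at most $1/b$ up to $O(1/m)$ corrections. Telescoping then gives survival at least $(\tau-1)/b$.

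Finally, I would multiply the three bounds and sum over $b$. Converting the sum to $\int_\alpha^1$ and optimizing $\alpha$ gives $c_k-O(1/\sqrt m)$. The table entries would follow by evaluating $p_k$, $q_k$, $y_k$ from the binomial and from the optimal parameters in~\cite{buchbinder2014secretary}.
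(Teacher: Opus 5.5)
Your skeleton matches the paper's, and your IC argument is fine. The skeleton is: per-candidate values computed against the previous blocks plus that candidate alone, so each block's secretary sees a fixed multiset in random order; a $p_kq_k$ value bound with an $O(1/\sqrt m)$ loss; availability amortized over the random block partition, times a $1-y_k(t-1)/m$ fallback factor; and a Riemann sum. The genuine gap is the value step, which as you set it up does not yield $p_k$. The per-candidate average only gives $\E[\max_s w(e_{b,s})]\ge \OPT/n=\OPT/(km)$, which loses a factor $k$. Following ``a fixed OPT edge'' does not control tentative edges, since those come from matchings on the revealed instance.

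The missing idea is that the tentative matching must be the maximum-weight matching of cardinality exactly $b$; your version of the algorithm drops this constraint. The comparison object is $\widehat M_b$, the maximum-weight cardinality-$b$ matching between $L$ and $B_1\cup\dots\cup B_b$.
\begin{itemize}
\item Conditioned on that union, $B_b$ is a uniform $k$-subset of its $bk$ vertices. So the number of online endpoints of $\widehat M_b$ in $B_b$ is hypergeometric with mean exactly $1$, and it equals $1$ with probability $p_{k,b}=p_k+O(1/b)$.
\item On that event, $\widehat M_b$ is feasible, hence optimal, for $L$ versus $B_1\cup\dots\cup B_{b-1}\cup\{r\}$. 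By consistent tie-breaking, the tentative edge of the unique endpoint $r$ is its $\widehat M_b$-edge, which is a uniformly random edge of $\widehat M_b$.
\item Finally, $\E[w(\widehat M_b)]\ge \frac bm\bigl(1-\frac1{2\sqrt b}\bigr)w(M^\star)$. This comes from keeping the heaviest $\min(N_b,b)$ edges of $M^\star$ with online endpoint in the first $b$ blocks, where $N_b$ is hypergeometric with mean $b$. This truncation, not a generic concentration remark, is the source of the $O(1/\sqrt m)$ term.
\end{itemize}
Without the cardinality constraint, the optimal matching on the first $b$ blocks can have $\min(m,bk)$ edges. Then the count in $B_b$ has mean $\min(m,bk)/b>1$ for $k\ge2$ and $b<m$, and the ``exactly one'' event no longer has probability $p_k$.

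The availability step uses the right randomness but asserts the per-block bound without a mechanism. Your refinement to ``only the selected candidate'' is unjustified as written: the secretary favours the heaviest tentative edge, so the selected candidate is correlated with its endpoint. The paper instead union-bounds over all $k$ candidates and exposes $B_b=V_1\cup\{r\}$ with $|V_1|=k-1$. The matching that evaluates $r$ is the one between $L$ and $(B_1\cup\dots\cup B_b)\setminus V_1$. It depends on $V_1$ alone and gives $u$ at most one partner, while $r$ is uniform over those $(b-1)k+1$ vertices. This yields $k/((b-1)k+1)\le 1/(b-1)$ per block, and survival $\prod_{b=T}^{t-1}\frac{b-2}{b-1}=\frac{T-2}{t-2}$. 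Since $k/((b-1)k+1)>1/b$ for $k\ge2$, your ``$1/b$'' holds only up to an $O(1/m^2)$ per-block correction.

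Defining the event through all candidates also matters for the fallback step. That event depends only on the unordered blocks, which is exactly what makes the conditional expected number of earlier matches equal to $y_k(t-1)$. For an event defined via the secretary's selections, this is no longer automatic.
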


\subsection{Algorithm}

We partition the online sequence into $m$ contiguous blocks
$R_1,\ldots,R_m$, each of size $k$, and write
$r_{t,s}$ for the $s$th arrival in block $R_t$. As in Section~\ref{sec:balanced_wt}, we match at
most one online vertex per block. For each of the first $T-1$ blocks, where
$T=\lceil \alpha_k m\rceil$, we independently with probability $y_k$ select a uniformly random online vertex within the block, and match it to a random
available offline vertex.

For each later block $R_t$, we treat its $k$ online vertices as candidates in an IC secretary instance. The value assigned to candidate $r_{t,s}$ is computed from the \emph{maximum-weight cardinality-constrained} matching of size $t$ between $L$ and $R_1\cup\cdots\cup R_{t-1}\cup\{r_{t,s}\}$. If $r_{t,s}$ is matched in this offline matching, we let $e_{t,s}=(\ell_{t,s},r_{t,s})$ denote its incident ``tentative'' edge and assign secretary value $w(e_{t,s})$; otherwise, its value is $0$. The key point is that each candidate in the current block is evaluated separately, using the previous blocks together with that candidate alone. Hence, conditioned on $R_t$, the values presented to the secretary subroutine form a fixed multiset arriving in uniformly random order. We break ties using an independent random priority sampled before the block is processed. This is also the first main point at which our algorithm departs from Kesselheim et al.~\cite{kesselheim2013optimal}: the more direct extension of their approach, in which the tentative value of a candidate is computed using all online vertices observed so far, does not preserve this property because the value assigned to a candidate can depend on which other vertices from the same block arrived earlier. Under that construction, the secretary subroutine therefore no longer sees a fixed multiset in random order. Finally, as in Section~\ref{sec:balanced_wt}, if the tentative edge of the candidate selected by the IC secretary algorithm is unavailable, we match the selected online vertex to a uniformly random available offline vertex.

\begin{algorithm}[t]
\caption{Block-Based Online Matching Algorithm}
\label{alg:ic_edge_wt}
\KwIn{Offline vertices $L$, online blocks $R_1,\dots,R_m$, threshold $T$,
block match probability $y_k$}

Sample a random permutation $\rho$ of $L$\;

\For{$t=1$ \KwTo $T-1$}{
    With probability $y_k$, choose $r\sim \mathrm{Unif}(R_t)$ and match it
    to the highest-ranked available offline vertex according to $\rho$\;
}

\For{$t=T$ \KwTo $m$}{
    Start a fresh IC secretary instance of Algorithm~\ref{alg:ic_sec}
    with parameter $p=\eta_k=y_k/k$\;
    \For{$s=1$ \KwTo $k$}{
        Let $M_{t,s}^\star$ be the maximum-weight cardinality-$t$ matching
        between $L$ and $R_1\cup\cdots\cup R_{t-1}\cup\{r_{t,s}\}$\;
        \eIf{$r_{t,s}$ is matched in $M_{t,s}^\star$}{
            Let $e_{t,s}=(\ell_{t,s},r_{t,s})$ be its tentative edge and
            give value $w(e_{t,s})$ to the secretary instance\;
        }{
            Give value $0$ to the secretary instance and set
            $\ell_{t,s}=\emptyset$\;
        }
        \If{the secretary instance selects $r_{t,s}$}{
            \eIf{$\ell_{t,s}\neq\emptyset$ and $\ell_{t,s}$ is available}{
                Match $r_{t,s}$ to $\ell_{t,s}$\;
            }{
                Match $r_{t,s}$ to the highest-ranked available offline
                vertex according to $\rho$\;
            }
        }
    }
}
\end{algorithm}

The algorithm is IC because every block produces a match
with the same probability $y_k$.  In the initial blocks, this is by
construction. In later blocks, it is the hire probability of the IC secretary
subroutine. Within every block, each position has the same ex-ante selection
probability, and since every selected online vertex is matched using either
its tentative edge or a fallback edge, every arrival position is matched with
probability $y_k/k$.

\subsection{Analysis}

Throughout this analysis, $n=km$. We expose the uniformly random arrival order in two stages: first a uniformly
random partition of $R$ into $m$ blocks of size $k$, followed by a uniformly
random ordering of the vertices inside each block. Let $M^\star$ be an
optimal offline matching.

Lemma~\ref{lemma:ic_algo_gives_good_gain} says that the best tentative edge in a block carries a
constant fraction of the offline optimum in expectation. Its proof, deferred
to Appendix~\ref{sec:proof_supporting_lemmas}, combines Lemma~\ref{lem:edge_avail_kesselheim} with a unit-occupancy calculation for random blocks.

\begin{lemma}[Tentative-edge value]
\label{lemma:ic_algo_gives_good_gain}
For every block $t\in[T,m]$, let
$e_{t,1},\ldots,e_{t,k}$ be the tentative edges defined by
Algorithm~\ref{alg:ic_edge_wt}. Then
\[
\E\!\left[\max_{s\in[k]} w(e_{t,s})\right]
\ge
\frac{p_k}{m}\,w(M^\star)
\left(1-O\!\left(\frac1{\sqrt m}\right)\right).
\]
\end{lemma}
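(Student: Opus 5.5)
We propose to prove the bound by combining the per-candidate averaging of Lemma~\ref{lem:edge_avail_kesselheim} with an ``exactly one relevant candidate in the block'' event. That event is what turns a bound on a single candidate into a bound on the maximum while losing only the factor $p_k$. Let $Q\subseteq R$ be the set of online vertices matched in $M^\star$, so $|Q|\le m$; pad with zero-weight dummies if needed so that $|Q|=m$ exactly. Since $n=km$, a fixed block $R_t$ is a uniformly random $k$-subset of $R$. Hence $|R_t\cap Q|$ is hypergeometric, with $k$ draws from a population of $km$ containing $m$ marked items, and $\Pr[|R_t\cap Q|=1]=p_k(1\pm O(1/m))$.

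\emph{Step 1 (reduction to a single candidate).} Let $E_s$ be the event that $r_{t,s}$ is the unique vertex of $R_t\cap Q$. The events $E_1,\dots,E_k$ are disjoint, so
\[
\max_{s} w(e_{t,s})\ \ge\ \sum_{s=1}^k w(e_{t,s})\,\mathbf 1_{E_s}.
\]
It therefore suffices to show that $\E[w(e_{t,s})\mathbf 1_{E_s}]\ge \frac{p_k}{km}w(M^\star)(1-O(1/\sqrt m))$ for each $s$.

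\emph{Step 2 (symmetry as in Lemma~\ref{lem:edge_avail_kesselheim}).} Condition on $E_s$ and on the unordered set $X=R_1\cup\dots\cup R_{t-1}\cup\{r_{t,s}\}$. Given this conditioning, $r_{t,s}$ is uniform over $X\cap Q$, and the other $k-1$ vertices of $R_t$ avoid $Q$. The matching $M^\star_{t,s}$ depends only on $X$, and its cardinality is $t$. Summing the tentative weights over the candidates $v\in X\cap Q$ that could play the role of $r_{t,s}$ gives at most $w(M^\star_{t,s})$. I would lower bound $w(M^\star_{t,s})$ by the weight of the $M^\star$-edges incident to $X\cap Q$, keeping the $t$ heaviest of them if there are more than $t$. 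Under the conditioning, $|X\cap Q|$ is close to $(t-1)+1=t$ in expectation, with fluctuations of order $\sqrt m$; a Chernoff/hypergeometric concentration bound yields the $1-O(1/\sqrt m)$ factor. By exchangeability of $Q$-vertices, the expected $M^\star$-weight on $X\cap Q$ is $\frac{|X\cap Q|}{m}w(M^\star)$. Dividing by $|X\cap Q|\approx t$ gives a per-candidate contribution of about $w(M^\star)/m$ on $E_s$. Multiplying by $\Pr[E_s]=p_k/k$ and summing over $s$ gives $\frac{p_k}{m}w(M^\star)$.

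\emph{Main obstacle.} The delicate step is Step 2. The sum of the tentative weights over the $Q$-vertices in $X$ does not by itself bound the quantity we need from below. Tentative edges may go to vertices of $X\setminus Q$, and conditioning on $E_s$ biases $X$ toward containing fewer $Q$-vertices outside block $t$. I would handle this as follows. First, note that on $E_s$ the value $w(e_{t,s})$ is at least the marginal gain of adding $r_{t,s}$. Second, argue, via the fact that $M^\star_{t,s}$ restricted to candidates is a maximum-weight matching of cardinality $t$ on $X$, that $\sum_{v\in X\cap Q} w(e_v)\ge w(M^\star|_{X\cap Q})-(\text{loss from cardinality truncation})$. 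Third, bound the truncation loss by $O(\sqrt m)$ heaviest-edge-averaged terms, which is absorbed into $O(1/\sqrt m)$. If this direct comparison fails, my fallback is to drop the restriction to $Q$ in the symmetry step. I would then use Lemma~\ref{lem:edge_avail_kesselheim} on the random set $X$ of size $(t-1)k+1$, and compute $\Pr[E_s]$ jointly with the identity of the tentative edge through a unit-occupancy argument on the $M^\star$-edges.
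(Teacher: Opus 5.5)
\textbf{There is a genuine gap.} It is the step you flagged as the main obstacle, and it cannot be repaired inside your decomposition. Your occupancy event $E_s$ is defined through the endpoints $Q$ of the global optimum $M^\star$, but $w(e_{t,s})$ is read off the local matching $M^\star_{t,s}$ on $X$. The local optimum can shift the weight of a $Q$-vertex onto a vertex outside $Q$.

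Consider a gadget with offline $u_a,u_b$ and online $v_a,v_b,z$, where $w(u_a,v_a)=1$, $w(u_a,z)=1-\delta/2$, $w(u_b,v_a)=w(u_b,v_b)=\delta$, and all other gadget weights are $0$. Give every other offline vertex a private online partner of weight $\eta<\delta/2$, with weight zero elsewhere. Then $M^\star$ contains $u_av_a$ and $u_bv_b$, so $v_a\in Q$ and $z\notin Q$. But whenever $v_a,z\in X$ and $v_b\notin X$, the cardinality-$t$ optimum on $X$ contains $u_az$ and $u_bv_a$, of weight $1+\delta/2>1$. (If the cardinality constraint is tight, it is cheaper to drop an $\eta$-edge.) So the tentative weight of $v_a$ is $\delta$, while its $M^\star$-edge has weight $1$, and no truncation is involved. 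Hence the inequality $\sum_{v\in X\cap Q}w(e_v)\ge w(M^\star|_{X\cap Q})-(\text{truncation loss})$ is false.

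This is not only a proof-technique issue. Cover a small constant fraction of $L$ with disjoint copies of this gadget and let $\delta,\eta\to0$. The bad configuration then hits roughly an $\alpha(1-\alpha)$ fraction of the gadget weight in $X$, where $\alpha=t/m$. So $\E[\sum_s w(e_{t,s})\mathbf 1_{E_s}]$ falls short of $\frac{p_k}{m}w(M^\star)$ by a constant factor of about $1-\alpha(1-\alpha)$ for $t=T$, rather than by $1-O(1/\sqrt m)$. Your Step 1 target is therefore false. The lost weight reappears only on events where a non-$Q$ vertex such as $z$ is the relevant candidate of block $t$, and the bound $\max_s w(e_{t,s})\ge\sum_s w(e_{t,s})\mathbf 1_{E_s}$ never counts those events. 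Your fallback again runs unit occupancy on the $M^\star$-edges, so it has the same problem.

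\textbf{The missing idea} is to run the occupancy argument on a local optimum that is consistent with all the candidate subinstances simultaneously. Let $\widehat M_t$ be the maximum-weight cardinality-$t$ matching between $L$ and the full union $R_1\cup\cdots\cup R_t$.
\begin{itemize}
\item Conditioned on this union, $R_t$ is a uniformly random $k$-subset. So exactly one of the $t$ online endpoints of $\widehat M_t$ lies in $R_t$ with probability $p_{k,t}$.
\item On that event, $\widehat M_t$ lies inside the endpoint's subinstance $R_1\cup\cdots\cup R_{t-1}\cup\{r_{t,s}\}$ and is optimal there. By the consistent tie-breaking rule, the tentative edge of $r_{t,s}$ is exactly its $\widehat M_t$-edge.
\item By symmetry this edge is uniform over the $t$ edges of $\widehat M_t$. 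This gives $\E[\max_s w(e_{t,s})\mid R_1\cup\cdots\cup R_t]\ge p_{k,t}\,w(\widehat M_t)/t$.
\end{itemize}
The global $M^\star$ enters only through $\E[w(\widehat M_t)]\ge\frac{\E[\min(N_t,t)]}{m}w(M^\star)\ge\frac tm\bigl(1-\frac1{2\sqrt t}\bigr)w(M^\star)$, where $N_t$ counts $M^\star$-endpoints in the first $t$ blocks. This, together with $p_{k,t}=p_k+O(1/t)$, is where the $1-O(1/\sqrt m)$ factor comes from.
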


Lemma~\ref{lemma:amortized_availability} (analog of Lemmas~\ref{lemma:low_prob_greedy_match} and \ref{lemma:low_prob_ran_match}) is the primary new analytical ingredient, and our second main point of departure from Kesselheim et al.~\cite{kesselheim2013optimal}. It lower bounds the
probability that a tentative endpoint survives until its block arrives. The availability analysis of \cite{kesselheim2013optimal} relies only on the random order of arrival, and trying to apply it by fixing the partition of online vertices into the blocks and exploiting random arrival order of blocks and random arrival order within blocks does not work. In fact, the result is not true if we fix the partition of online vertices into the blocks. This is because if we only rely on the randomness of arrivals of blocks and of how online nodes arrive within a block, we might have offline vertices that are less likely to be available than others. For instance, the same offline vertex may have the highest edge-weight with several candidates within one block of online vertices. Lemma \ref{lemma:amortized_availability} addresses this availability analysis by \textbf{additionally} amortizing over the random block partition when considering the probability of the vertex being matched as either a tentative edge or a fallback edge.

\begin{lemma}[Amortized availability]
\label{lemma:amortized_availability}
Fix any offline vertex $u\in L$ and block $t\in[T,m]$. Conditioning on the
current and future blocks $(R_t,\ldots,R_m)$, the probability that $u$ is available
at the beginning of block $t$ is at least
\[ \left(\frac{T-1}{t-1}-O\!\left(\frac1m\right)\right)
\left(1-y_k\frac{t-1}{m}\right).
\]
\end{lemma}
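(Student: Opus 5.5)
The plan mirrors the balanced-case argument (Lemmas~\ref{lemma:low_prob_greedy_match} and \ref{lemma:low_prob_ran_match}) but works at the granularity of blocks. Offline vertex $u$ can be consumed before block $t$ only in two ways: as the tentative endpoint $\ell_{t',s}$ of the candidate selected in some block $t'\in[T,t-1]$, or as a fallback/initial match chosen via $\rho$. Let $A_t$ be the event that $u$ is not the tentative endpoint of the selected candidate in any block $T\le t'\le t-1$. I will first lower bound $\Pr[A_t\mid R_t,\ldots,R_m]$ by $\frac{T-1}{t-1}-O(1/m)$. Then, exactly as in Lemma~\ref{lemma:low_prob_ran_match}, I will bound the fallback part using the independent permutation $\rho$. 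Multiplying the two bounds gives the stated product.

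\textbf{Fallback step.} Conditional on $A_t$ and on the arrival sequence, $u$ can only be taken by a $\rho$-based match. Each of the $t-1$ earlier blocks produces at most one match, and it does so with probability $y_k$. This holds independently of the order: by construction for initial blocks, and by the hire probability of the IC secretary for later blocks. So the number $N$ of offline vertices matched before block $t$ has $\E[N]\le y_k(t-1)$. If $u$ is not among the top $N$ of $\rho$ (more precisely, $\rho$-based picks take the highest available vertex, so $u$ is consumed only if it is among the first $N$ positions of $\rho$), it survives. Since $\rho$ is independent of everything else, averaging $1-N/m$ gives $1-y_k\frac{t-1}{m}$. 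A small subtlety is that $N$ and $A_t$ may be correlated, but $N$ depends only on the secretary randomness and block coins, whose hire probability per block is $y_k$ regardless of values. I would therefore argue $\E[N\mid A_t,\ldots]\le y_k(t-1)$ by block-by-block conditioning on hire events. If that fails, I would use the cruder bound $N\le t-1$ only for the secretary-selected blocks.

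\textbf{Tentative step, by backward induction on $t'$ from $t-1$ down to $T$.} Condition on $R_{t'+1},\ldots,R_m$. This fixes the unordered union $U=R_1\cup\cdots\cup R_{t'}$, but $R_{t'}$ is a uniformly random $k$-subset of $U$. The tentative edges of block $t'$ are computed from $R_1\cup\cdots\cup R_{t'-1}\cup\{r\}$, so they depend on the partition, not just on $U$. This is the main obstacle: a fixed partition can make $u$ the endpoint for several candidates in one block. The idea is to amortize over the random choice of $R_{t'}$. The secretary selects at most one candidate, and given the multiset of values it selects each position symmetrically. Hence the probability that $u$ is hit in block $t'$ is at most the expected number of candidates $r\in R_{t'}$ whose tentative endpoint is $u$, divided by an appropriate factor.

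I would compare each candidate matching $M^\star_{t',s}$ with the optimal cardinality-$t'$ matching on $R_1\cup\cdots\cup R_{t'-1}\cup\{r\}$. Swapping $r$ with a different element changes the matching only along an alternating path. Using the consistent tie-breaking, I want to show that $u$ is the endpoint of $r$'s tentative edge for at most $O(1)$ vertices $r\in U$ per fixed ``rest'' set on average. This should give a per-block hit probability of at most $\frac{1}{t'}+O(1/m^2)$ after averaging over which $k$ elements form $R_{t'}$, i.e.\ the Kesselheim-style $1/t'$ with the $k$ block elements each contributing $\approx 1/(kt')$. Telescoping $\prod_{t'=T}^{t-1}(1-\frac1{t'}-O(m^{-2}))$ then yields $\frac{T-1}{t-1}-O(1/m)$. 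I expect this per-block $1/t'$ bound, which handles the dependence between a candidate's evaluation set and the rest of the block, to be the hardest part. If the exact alternating-path argument is messy, I would fall back to bounding the hit probability by $\Pr[u$ matched to the selected candidate$]$, using that the selected candidate is uniform over $R_{t'}$ by the symmetry of the secretary given the values.
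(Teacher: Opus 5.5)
Your decomposition is the paper's: first avoid being a tentative endpoint in blocks $T,\dots,t-1$, then avoid the $\rho$-based fallback using an expected match count of at most $y_k(t-1)$, and telescope a per-block bound obtained by averaging over the random $k$-subset $R_{t'}$. The gap is that this per-block bound, which you flag as the crux, is never proved, and both routes you offer for it are problematic. The alternating-path and tie-breaking comparison is only stated as ``I want to show''. The backup route rests on a false claim. The IC secretary is symmetric in \emph{arrival positions}, not in candidates. Given the values, it does not select a uniformly random member of $R_{t'}$: it picks the maximum-value candidate with probability $q_k$, and the candidates whose tentative endpoint is $u$ may be exactly the high-value ones. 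For the same reason there is no ``appropriate factor'' to divide by. You must bound the probability that $u$ is the tentative endpoint of \emph{some} candidate in the block, via a union bound.

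The missing observation makes this a one-line argument. For a candidate $r\in R_{t'}$, the evaluation set is $U\setminus(R_{t'}\setminus\{r\})$, so it depends only on the other $k-1$ block members. Expose $R_{t'}$ as a uniform $(k-1)$-set $V_1$ plus a uniform $r\in U\setminus V_1$. For fixed $V_1$, every possible $r$ is evaluated against the \emph{same} matching, namely the optimal cardinality-$t'$ matching on $U\setminus V_1$. In that matching $u$ has at most one partner, so $\Pr[(u,r)\in M_{V_1\cup\{r\},r}\mid V_1]\le 1/((t'-1)k+1)$. A union bound over the $k$ positions then gives a hit probability of at most $k/((t'-1)k+1)\le 1/(t'-1)$, which is the $1/t'+O(1/m^2)$ you were aiming for. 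Telescoping gives $\prod_{b=T}^{t-1}\frac{b-2}{b-1}=\frac{T-2}{t-2}=\frac{T-1}{t-1}-O(1/m)$, with no alternating paths needed.

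Relatedly, the paper defines $A_t$ as ``$u$ is not a tentative endpoint of any candidate'' rather than of the selected one. That event depends only on the unordered blocks, so $\E[H_t\mid A_t,R_t,\dots,R_m]=y_k(t-1)$ holds exactly. With your selected-candidate event, the correlation you worry about is real. It can still be handled: conditional on the ordered partition the blocks are independent, and $\Pr[\text{hire}\mid\neg(\text{hire with endpoint }u)]\le y_k$.
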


\begin{proof}[Proof of Theorem~\ref{thm:ic_edge_wt2}]
It remains only to prove the competitive ratio. Consider a block $t\ge T$.
Conditioned on the candidate values in this block, the IC secretary
subroutine selects the maximum-value candidate with probability $q_k$.
By Lemma~\ref{lemma:ic_algo_gives_good_gain}, the expected value of this
maximum tentative edge is at least
\[
\frac{p_k}{m}w(M^\star)
\left(1-O\!\left(\frac1{\sqrt m}\right)\right).
\]
Conditioning on $R_t,\ldots,R_m$ and on the order and secretary randomness
within block $t$, Lemma~\ref{lemma:amortized_availability} implies that the offline
endpoint of the selected tentative edge is available with probability at least
\[
\left(\frac{T-1}{t-1}-O\!\left(\frac1m\right)\right)
\left(1-y_k\frac{t-1}{m}\right).
\]
Therefore,
\[
\E[\mathrm{ALG}]
\ge
w(M^\star)\frac{p_kq_k}{m}
\left(1-O\!\left(\frac1{\sqrt m}\right)\right)
\sum_{t=T}^m
\left(\frac{T-1}{t-1}-O\!\left(\frac1m\right)\right)
\left(1-y_k\frac{t-1}{m}\right).
\]
Writing $T=\lceil \alpha m\rceil$ and approximating the sum by the
corresponding integral gives
\[
\frac1m\sum_{t=T}^m
\frac{T-1}{t-1}
\left(1-y_k\frac{t-1}{m}\right)
=
\alpha\int_\alpha^1 \frac{1-y_kx}{x}\,dx
\pm O\!\left(\frac1{\sqrt m}\right).
\]
Absorbing the $O(1/m)$ availability correction into the $O(1/\sqrt m)$ error term, we obtain
\[
\E[\mathrm{ALG}]
\ge
w(M^\star)p_kq_k
\left(
\alpha\left(\log\frac1\alpha-y_k(1-\alpha)\right)
-
O\!\left(\frac1{\sqrt m}\right)
\right).
\]
Optimizing over $\alpha$ gives
\[
\E[\mathrm{ALG}]
\ge
w(M^\star)
\left(
p_kq_k f(\alpha_k;y_k)
-
O\!\left(\frac1{\sqrt m}\right)
\right),
\]
which proves the theorem.
\end{proof}

\section{Binary-weighted Complete Graphs}
\label{sec:unweighted}

We next show that binary weights admit a better constant via a different
greedy block-based algorithm. Here edge weights lie in $\{0,1\}$, but the
graph is still complete: weight-$0$ edges are feasible matches. This
distinction is immaterial in classical unweighted matching, but is essential
under arrival-time IC, as shown by the upper bound of
Section~\ref{sec:lb-forbidden}.

Let $q_k$ denote the probability that the IC secretary
subroutine $\mathcal{M}_p$ of Buchbinder et al.~\cite{buchbinder2014secretary} selects the best of $k$
candidates under the optimal parameter $p$. Recall that
$q_k \to 1-1/\sqrt{2}\approx 0.29$.

\begin{theorem}
\label{thm:unweighted}
Let $|L|=m$ and $|R|=km$. There is an arrival-time IC
algorithm for binary-weighted online bipartite matching in the random-order
model with competitive ratio at least
\[
\left(1-\frac1e\right)\frac{m+1}{2m}\cdot \frac{q_k}{1+q_k}.
\]
In particular, as $k\to\infty$, this gives a competitive ratio of
 $\approx 0.071$.
\end{theorem}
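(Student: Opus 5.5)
The plan is to build a greedy block algorithm mirroring Algorithm~\ref{alg:ic_edge_wt}, but with binary candidate values computed from the \emph{current} availability state. Then I will prove the bound by amortized accounting that charges lost optimal edges against gained weight-$1$ edges.

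\textbf{Algorithm.} Partition the arrivals into $m$ blocks of size $k$ and sample a permutation $\rho$ of $L$ up front. In block $t$, run a fresh copy of the IC secretary subroutine $\mathcal{M}_p$ of Buchbinder et al.~\cite{buchbinder2014secretary} with the optimal $p$. Candidate $r_{t,s}$ gets value $1$ if it has a weight-$1$ edge to some offline vertex available at the \emph{start} of block $t$, and value $0$ otherwise.

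Two properties of this construction matter.
\begin{itemize}
\item Since at most one match is made per block, availability is frozen during the block. Hence, conditioned on the block's vertex set, the values form a fixed multiset in uniformly random order, which is exactly what the secretary guarantee needs.
\item The hire probability of $\mathcal{M}_p$ at fixed $p$ does not depend on the values. So every block, and by symmetry every position, is matched with the same probability, which gives IC as in Section~\ref{sec:edge_wt}.
\end{itemize}
A selected candidate of value $1$ is matched to its highest-$\rho$-ranked available weight-$1$ neighbour. A selected candidate of value $0$ is matched to the highest-ranked available vertex overall. There is no exploration prefix.

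\textbf{Accounting.} Fix a maximum matching $M^\star$ of weight-$1$ edges, with $|M^\star|=\OPT$. Call an edge $(u,v)\in M^\star$ \emph{alive} at block $t$ if $u$ is still available and $v$ has not yet arrived.
\begin{itemize}
\item If block $t$ contains the online endpoint of an alive edge, the block's maximum value is $1$. The secretary then selects a value-$1$ candidate with probability at least $q_k$, giving gain $1$.
\item Each block makes at most one match, so it destroys (makes unavailable) at most one alive $M^\star$ offline endpoint.
\end{itemize}
Let $G$ be the expected number of weight-$1$ matches and $D$ the expected number of $M^\star$ offline vertices consumed by the algorithm without being the gain for their own $M^\star$ partner. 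I would combine these two facts into an inequality of the form
\[
G \ \ge\ q_k\cdot\big(\text{expected number of }M^\star\text{ edges whose online endpoint arrives while the edge is alive}\big),
\]
and then bound the latter from below by a quantity of order $\OPT\cdot\tfrac{m+1}{2m} - D$. Together with $D\le G$ this yields
\[
G\ \ge\ \tfrac{q_k}{1+q_k}\cdot \text{(that quantity)}.
\]
The bound $D\le G$ is charged as follows: each consumed vertex is consumed by a match, and a value-$0$ fallback only happens when the block has no alive partner available.

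\textbf{Where the constants come from.} I expect the factor $\tfrac{m+1}{2m}=\tfrac1{m^2}\sum_{t=1}^m t$ to arise from the uniformly random block of $v$'s arrival. Roughly, an $M^\star$ edge whose online endpoint lands in block $t$ faces at most $t-1$ earlier blocks, each threatening its offline endpoint. Averaging over the block position and over $\rho$ gives a linear survival profile $(m-t+1)/m$, whose average is $(m+1)/(2m)$. The factor $1-1/e$ should come from the \textsc{Ranking}-style choice via $\rho$ among weight-$1$ neighbours. I would invoke the randomized primal--dual analysis of Devanur et al.~\cite{DevanurJK13}, viewing the sequence of value-$1$ selections as a \textsc{Ranking} run on the subgraph of selected vertices. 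This gives a $(1-1/e)$ fraction of the maximum matching among the online vertices that are effectively offered.

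\textbf{Main obstacle.} The hardest step is making the Ranking argument and the secretary/block accounting compose. The set of online vertices that actually get selected depends on $\rho$ through availability, which breaks the standard Ranking dual-fitting. My first attempt would be to define dual variables $\alpha_u=g(\rho\text{-rank of }u)$ only on matches made via value-$1$ selections. I would then verify the approximate dual feasibility for each $M^\star$ edge in expectation over both the block partition and the within-block secretary randomness. The secretary loss $q_k$ and the damage term $D$ would be absorbed into the $\tfrac{q_k}{1+q_k}$ factor, rather than multiplying the $(1-1/e)$ guarantee twice.
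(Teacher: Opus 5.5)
\textbf{Verdict: there is a genuine gap.} Your IC argument is fine and matches the paper's. Your $\frac{q_k}{1+q_k}$ balancing is also the paper's. The gap sits exactly where you look for the factor $1-\frac1e$.

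Your first inequality, $G\ge q_k\cdot(\text{expected number of }M^\star\text{ edges whose online endpoint arrives while alive})$, does not follow from the secretary guarantee. A block can contain the online endpoints of several alive $M^\star$ edges but makes at most one match. So you get $q_k$ per \emph{block} containing an alive endpoint, not per alive edge. In fact the inequality fails for large $k$ on the instance whose only weight-$1$ edges form a matching $M^\star$ of size $m$. There, the value-$1$ candidates of a block are exactly its alive endpoints. A block with $c\ge2$ of them hires one with probability strictly below $c\,q_k$; for $c\ge3$ this is simply because it hires at all with probability $y_k\to1/\sqrt2<3q_k$. Such blocks occur with constant probability.

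This per-block loss is what the $1-\frac1e$ in the theorem pays for, and it has nothing to do with \textsc{Ranking}. The paper matches a gain-$1$ candidate to an arbitrary available weight-$1$ neighbour. So the primal--dual composition you flag as the main obstacle is both unproven and unnecessary.

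The missing idea is to change the benchmark. Given the unordered partition, let $M^\star_\pi$ be a maximum matching using at most one online vertex per block, chosen independently of the block order. Let $M=w(M^\star)\le m$. A fixed block contains none of the $M$ online endpoints of $M^\star$ with probability at most $(1-\frac{M}{km})^k\le e^{-M/m}$. Since $1-e^{-x}\ge(1-\frac1e)x$ on $[0,1]$, this gives $\E[w(M^\star_\pi)]\ge(1-\frac1e)w(M^\star)$ (Lemma~\ref{lem:block_opt_binary}). Relative to $M^\star_\pi$, each contributing block $R_j$ has a single designated edge $(\ell_j^\star,r_j^\star)$, and these offline endpoints are distinct. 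If $\ell_j^\star$ is free when $R_j$ arrives, the secretary hires a gain-$1$ candidate with probability at least $q_k$. That per-block statement is the one you can actually prove.

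Your second ingredient, $D\le G$, also fails with $D$ as you define it, and its justification is wrong. The IC secretary can hire a value-$0$ candidate even when its block contains a value-$1$ candidate: position $1$ is always best-so-far and is hired with probability $p$ whatever its value. Moreover, value-$0$ hires in blocks with no alive edge still consume offline endpoints of $M^\star$ without producing any gain. On the instance above, $D$ is about twice $G$.

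The paper separates the two kinds of consumption:
\begin{itemize}
\item $\alpha_j$ is the probability that $\ell_j^\star$ was already taken by a weight-$1$ edge. This is paid for by giving reward $\beta$ to the offline vertex and $1-\beta$ to the block of every weight-$1$ match.
\item $\gamma_j$ is the probability that $\ell_j^\star$ was taken by a weight-$0$ edge. This is simply lost and bounded by $\gamma_j\le\frac{j-1}{m}$.
\end{itemize}
The pair $(R_j,\ell_j^\star)$ then earns at least $\alpha_j\beta+(1-\alpha_j-\gamma_j)q_k(1-\beta)$. The choice $\beta=\frac{q_k}{1+q_k}$ cancels $\alpha_j$. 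Averaging $1-\frac{j-1}{m}$ over the uniform position $j$ gives $\frac{m+1}{2m}$. So your balancing works once $D$ counts only weight-$1$ consumption, and the $\frac{m+1}{2m}$ factor comes from weight-$0$ consumption alone.

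Your $\rho$-based fallback is actually convenient here. Before block $j$, the fallback can reach $u$ only if $\rank_\rho(u)\le j-1$. Also, $\rho$ is independent of the partition and the block order, hence of $\ell_j^\star$. So $\gamma_j\le\frac{j-1}{m}$ follows immediately.
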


\paragraph{Algorithm and proof idea.} The full algorithm and proof are deferred to Appendix~\ref{app:binary}.
The algorithm partitions the arrival sequence into $m$ contiguous blocks of
size $k$ and matches at most one online vertex from each block. In each block,
we run a fresh copy of the IC secretary algorithm of
Buchbinder et al.~\cite{buchbinder2014secretary}. For a block, say block $t$, and for every online vertex $r_{t,s} \in R_t$ for $s\in [k]$, its value for the
secretary instance is its current marginal gain:
\[
\text{gain}(r_{t,s})=
\begin{cases}
1, & \text{if } r_{t,s} \text{ has an available weight-}1\text{ neighbor},\\
0, & \text{otherwise.}
\end{cases}
\]
If the secretary subroutine for block $t$ selects $r_{t,s}$, we match it to an available
weight-$1$ neighbor if $\text{gain}(r_{t,s})=1$, and otherwise to a uniformly random available offline vertex. Since the same IC secretary rule is used in every
block and every selected secretary is matched, the ex-ante match probability
is identical across all positions.

For arbitrary $n$, the padding construction in Appendix~\ref{app:padding}
also preserves arrival-time IC for this algorithm.

The analysis uses a different accounting argument from the edge-weighted
case. Let $M^\star_\pi$ be the best offline matching that uses at most one
online vertex from each block, chosen independently of block order. A random block partition preserves a
$(1-1/e)$ fraction of the optimal binary matching in expectation: marking
the online endpoints used by $M^\star$, the expected number of blocks
containing at least one marked vertex is at least $(1-1/e)w(M^\star)$.
Now fix a block $R_j$ that contributes a weight-$1$ edge
$(\ell_j^\star,r_j^\star)$ to $M^\star_\pi$. If $\ell_j^\star$ is still
available when $R_j$ arrives, then some candidate in the block has gain $1$,
and the secretary subroutine captures such a candidate with probability at
least $q_k$. If $\ell_j^\star$ has already been matched by a weight-$1$ edge,
then the algorithm has already earned value that we can charge to
$\ell_j^\star$; if it was consumed by a weight-$0$ edge, we may lose this
opportunity.

To balance these cases, assign reward $\beta$ to the offline endpoint and
$1-\beta$ to the block for every weight-$1$ edge found by the algorithm. If
\begin{align*}
\alpha_j &:= \Pr\!\left[\ell_j^\star \text{ is already matched via a weight-}1\text{ edge when } R_j \text{ arrives}\right],\\
\gamma_j &:= \Pr\!\left[\ell_j^\star \text{ is already matched via a weight-}0\text{ edge when } R_j \text{ arrives}\right],
\end{align*}
then the expected accounting reward for
$(R_j,\ell_j^\star)$ is at least $\alpha_j\beta +(1-\alpha_j-\gamma_j)q_k(1-\beta)$. We also have $\gamma_j\le (j-1)/m$, since by the time $R_j$ arrives at most $j-1$ offline vertices can have been matched. Choosing $\beta=q_k / (1+q_k) $, the contribution of block $j$ is
at least
\[
\left(1-\frac{j-1}{m}\right)\frac{q_k}{1+q_k}.
\]
Averaging over the uniformly random arrival position of each contributing block
and combining with the $(1-1/e)$ factor proves Theorem~\ref{thm:unweighted}.

\section{Conclusion and Future Work}
\label{sec:conclusion}

We introduced arrival-time incentive compatibility for random-order online
bipartite matching and showed that, despite this symmetry constraint,
constant-factor guarantees are possible for complete edge-weighted graphs.

Several directions remain open. First, the optimal competitive ratio is far
from settled. The IC secretary problem provides a natural asymptotic benchmark, where the
optimal best-candidate success probability converges to
$1-1/\sqrt{2}\approx0.293$. Closing the gap between this benchmark and our
matching guarantees, either through sharper upper bounds or improved
algorithms, is a natural next step. We also expect that some constants can be
improved by tightening the crude availability bounds in
Lemmas~\ref{lemma:low_prob_ran_match} and~\ref{lemma:prob_random_match},
for example via an ODE-based analysis.

Second, exact IC may be stronger than necessary in some
applications. An important direction is to study $\varepsilon$-IC, where no
arrival position can improve a user's matching probability by more than
$\varepsilon$, and to characterize the resulting tradeoff between IC and competitive ratio.

Finally, application-driven variants deserve further study. Metric rewards,
as in ride-sharing or spatial service systems, may allow stronger guarantees.
Another variant separates admission from assignment: the platform must
immediately accept or reject each request, but may defer the final offline
assignment until later. This model is natural for reservation systems and
already admits a ratio of $1$ in the balanced complete-graph case.

\bibliographystyle{splncs04}
\bibliography{sample-bibliography}

\appendix
\renewcommand{\theHsection}{appendix.\Alph{section}}

\section{Proof of Theorem~\ref{lem:forbidden-lb}}\label{sec:lb-forbidden-proof}

\begin{restatement}{Theorem~\ref{lem:forbidden-lb}}\
[No constant competitive ratio with infeasible edges]\
Fix $n\ge 1$. Consider unweighted online bipartite matching in the vertex-arrival model, where
each arriving online vertex reveals its feasible incident edges to offline vertices, and decisions are
immediate and irrevocable. Let ${A}$ be any (randomized) algorithm that is arrival-time
IC in the sense of Definition~\ref{def:IC} (i.e., $\Pr[r_s\text{ matched}]=\Pr[r_t\text{ matched}]$
for all $s,t\in[n]$).
Then there exists a balanced bipartite graph $G$ with $|L|=|R|=n$ such that
\[
\frac{\mathbb{E}[|{A}(G)|]}{\OPT(G)} \le \frac{1}{n}.
\]
In particular, no IC algorithm can achieve a constant competitive ratio on
general (non-complete) unweighted bipartite graphs.
\end{restatement}

\begin{proof}
Let the offline side be $U=\{u_1,\dots,u_n\}$ and the online side be $V=\{v_1,\dots,v_n\}$. For each permutation $\sigma$ of $[n]$, define the perfect-matching instance
\[
G^{\mathrm{pm}}_\sigma := (U,V,E^{\mathrm{pm}}_\sigma),
\qquad
E^{\mathrm{pm}}_\sigma := \{(u_{\sigma(j)},v_j): j\in[n]\}.
\]
Each such instance has optimum $n$; Figure~\ref{fig:forbidden-lb} illustrates the identity permutation. For each $i\in[n]$, define the star instance centered at $u_i$ (see Figure~\ref{fig:forbidden-lb}):
\[
G_{i}^{\mathrm{star}} := (U,V,E_{i}^{\mathrm{star}}),
\qquad
E_{i}^{\mathrm{star}} := \{(u_i,v_j): j\in[n]\}.
\]
Since all edges share the same offline endpoint $u_i$, we have $\OPT(G_i^{\mathrm{star}})=1$.

\paragraph{Step 1: IC bounds the per-time match probability on a star.}
Fix $i\in[n]$. By IC of ${A}$ on $G_{i}^{\mathrm{star}}$, there exists a value $p_i$
such that for every arrival index $t\in[n]$,
\[
\Pr[\text{the online vertex arriving at time $t$ is matched on } G_{i}^{\mathrm{star}}] = p_i.
\]
But ${A}$ can match at most one online vertex on $G_{i}^{\mathrm{star}}$, hence
\[
1 \;\ge\; \mathbb{E}[|{A}(G_{i}^{\mathrm{star}})|]
= \sum_{t=1}^n \Pr[\text{$r_t$ is matched on } G_{i}^{\mathrm{star}}]
= n\,p_i,
\]
so $p_i \le 1/n$. In particular,
\[
\Pr[\text{$r_1$ is matched on } G_{i}^{\mathrm{star}}] \le \frac{1}{n}.
\]

\paragraph{Step 2: Averaging over perfect-matching instances.}
For $i,j\in[n]$, let
\[
a_{ij}:=\Pr[\text{$r_1$ is matched on }G_i^{\mathrm{star}}\mid r_1=v_j].
\]
Since the first arriving vertex is uniform in $V$, Step 1 gives
\[
\frac1n\sum_{j=1}^n a_{ij}=p_i\le\frac1n.
\]
On $G^{\mathrm{pm}}_\sigma$, if the first arrival is $v_j$, it reveals the
single neighbor $u_{\sigma(j)}$. This is the same information state as the
first arrival of $v_j$ on $G_{\sigma(j)}^{\mathrm{star}}$, even if the
algorithm uses vertex labels. Therefore,
\[
\Pr[\text{$r_1$ is matched on }G^{\mathrm{pm}}_\sigma]
=\frac1n\sum_{j=1}^n a_{\sigma(j),j}.
\]
Averaging over a uniformly random permutation $\sigma$ gives
\[
\E_\sigma\!\left[\Pr[\text{$r_1$ is matched on }G^{\mathrm{pm}}_\sigma]\right]
=\frac1{n^2}\sum_{i=1}^n\sum_{j=1}^n a_{ij}
=\frac1n\sum_{i=1}^n p_i
\le\frac1n.
\]
Hence there is a fixed permutation $\sigma$ for which this first-arrival
match probability is at most $1/n$. Fix such a permutation and write
$G^{\mathrm{pm}}:=G^{\mathrm{pm}}_\sigma$.

\paragraph{Step 3: Using IC to bound $|A(G^{\mathrm{pm}})|$.}
By IC of ${A}$ on $G^{\mathrm{pm}}$, the probability that $r_t$ is matched is the same for all $t$,
so for every $t\in[n]$,
\[
\Pr[\text{$r_t$ is matched on } G^{\mathrm{pm}}]
= \Pr[\text{$r_1$ is matched on } G^{\mathrm{pm}}]
\le \frac{1}{n}.
\]
Hence,
\[
\mathbb{E}[|{A}(G^{\mathrm{pm}})|]
= \sum_{t=1}^n \Pr[\text{$r_t$ is matched on } G^{\mathrm{pm}}]
\le n\cdot\frac{1}{n} = 1.
\]
Since $\OPT(G^{\mathrm{pm}})=n$, the competitive ratio on $G^{\mathrm{pm}}$ is at most $1/n$.
\end{proof}

\section{\texorpdfstring
  {Incentive Compatible Secretary Algorithm~\cite{buchbinder2014secretary}}
  {Incentive Compatible Secretary Algorithm}}

\begin{algorithm}[H]
\caption{Incentive Compatible Mechanism $\mathcal{M}_p$}
\KwIn{Number of candidates $k$, parameter $p\in(0,1/k]$}
\KwOut{Selected candidate (at most one)}

No candidate is selected initially\;

\For{$i \gets 1$ \KwTo $k$}{
    \If{a candidate is already selected}{
        \textbf{break}\;
    }

    \If{$1 \le i \le \frac{1}{2p}$}{
        $r \gets \dfrac{i}{\frac{1}{p} - i + 1}$\;
        \If{candidate $i$ is the best so far}{
            Select candidate $i$ with probability $r$\;
        }
    }
    \Else{
        $r \gets \dfrac{i}{\frac{1}{p} - i + 1}$\;
        \If{rank of candidate $i$ is in top $\lfloor r \rfloor$}{
            Select candidate $i$\;
        }
        \ElseIf{rank of candidate $i$ is $\lfloor r \rfloor + 1$}{
            Select candidate $i$ with probability $r - \lfloor r \rfloor$\;
        }
    }
}\label{alg:ic_sec}
\end{algorithm}

\section{Supporting Lemmas for Section~\ref{sec:edge_wt}}
\label{sec:proof_supporting_lemmas}

This appendix contains the technical details deferred from
Section~\ref{sec:edge_wt}. Throughout, we assume that $n=km$ and that $m$
is sufficiently large that $T=\lceil\alpha_k m\rceil\ge3$. The finitely many
smaller values of $m$ can be absorbed into the $O(1/\sqrt m)$ error term in
Theorem~\ref{thm:ic_edge_wt2}.
Arrival-time IC for arbitrary arrival counts is established in
Appendix~\ref{app:padding}.

We expose the random arrival order in two stages. First, the online vertices
are partitioned uniformly at random into $m$ blocks
$R_1,\ldots,R_m$, each of size $k$, and the blocks are ordered uniformly at
random. Second, the vertices inside each block are independently permuted
uniformly at random. This two-stage exposure is equivalent to a uniformly
random permutation of $R$.

\subsection{Occupancy and truncation estimates}

We begin with two elementary estimates used in the proof of the tentative-edge
value lemma.

\begin{lemma}
\label{lem:pkasympt}
For integers $k,n\ge 1$, let $p_{k,n}$ be the probability that, in a uniformly
random matching of $n$ left vertices into $nk$ right vertices, exactly one
vertex from a fixed subset of $k$ right vertices is matched. Then
\[
p_{k,n}
=
\left(1-\frac1k\right)^{k-1}
+O\!\left(\frac1n\right).
\]
In particular,
\[
p_k:=\lim_{n\to\infty}p_{k,n}
=
\Pr\!\left[\mathrm{Bin}\!\left(k,\frac1k\right)=1\right]
=
\left(1-\frac1k\right)^{k-1},
\]
and $p_{k,n}\ge 1/e$.
\end{lemma}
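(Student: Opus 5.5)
We compute $p_{k,n}$ exactly as a hypergeometric probability and then compare it with the binomial limit. A uniformly random matching of $n$ left vertices into $nk$ right vertices covers a uniformly random $n$-subset $S$ of the right side, since by symmetry every $n$-subset is equally likely. Fix the subset $B$ of $k$ right vertices. Then
\[
p_{k,n}=\Pr[|S\cap B|=1]=\frac{\binom{k}{1}\binom{nk-k}{n-1}}{\binom{nk}{n}}.
\]

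\textbf{Simplifying the ratio.} Writing the binomials as falling factorials gives
\[
p_{k,n}=k\cdot\frac{n}{nk}\cdot\prod_{i=1}^{k-1}\frac{nk-n-(i-1)}{nk-i}
=\prod_{i=1}^{k-1}\frac{n(k-1)-i+1}{nk-i}.
\]
Each factor equals $\frac{k-1}{k}\cdot\bigl(1+O(1/n)\bigr)$, since the $O(i)$ corrections are at most $O(k)$ against denominators of order $nk$.

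\textbf{Asymptotics.} Multiplying the $k-1$ factors gives $p_{k,n}=(1-1/k)^{k-1}\bigl(1+O(k/n)\bigr)$, so $p_{k,n}=(1-1/k)^{k-1}+O(1/n)$ for each fixed $k$. Letting $n\to\infty$ gives $p_k=(1-1/k)^{k-1}=k\cdot\frac1k(1-\frac1k)^{k-1}=\Pr[\mathrm{Bin}(k,1/k)=1]$.

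\textbf{Lower bound.} The bound $p_{k,n}\ge 1/e$ is the step needing care, because an $O(1/n)$ error alone cannot give it. I would prove it directly from the product, with two elementary inequalities.

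First, each factor is at least $\frac{k-1}{k}$:
\[
\frac{n(k-1)-i+1}{nk-i}\ge\frac{k-1}{k}
\iff k\bigl(n(k-1)-i+1\bigr)\ge (k-1)(nk-i)
\iff k\ge i,
\]
which holds for all $i\le k-1$. Hence $p_{k,n}\ge(1-1/k)^{k-1}$.

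Second, $(1-1/k)^{k-1}\ge 1/e$. This follows from $\log(1-x)\ge -x/(1-x)$ with $x=1/k$, which gives $(k-1)\log(1-1/k)\ge -1$. The case $k=1$ is trivial, since then $p_{1,n}=1$.

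Combining the two inequalities gives $p_{k,n}\ge 1/e$. The only real care needed is checking the direction of the factor-wise inequality; the rest is a routine calculation.
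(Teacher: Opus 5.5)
Your proposal is correct and follows the paper's own argument: the hypergeometric formula, the falling-factorial product of $k-1$ factors each at least $1-\frac1k$ (giving $p_{k,n}\ge(1-\frac1k)^{k-1}\ge 1/e$), and the factorwise $\frac{k-1}{k}\bigl(1+O(1/n)\bigr)$ asymptotics for fixed $k$. Your explicit check of the factor inequality and your derivation of $(1-\frac1k)^{k-1}\ge 1/e$ from $\log(1-x)\ge -x/(1-x)$ simply supply steps the paper asserts without proof.
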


\begin{proof}
A uniformly random matching of $n$ left vertices into $nk$ right vertices
induces a uniformly random $n$-subset of matched right vertices. Thus, if
$X$ is the number of matched vertices in a fixed subset of $k$ right vertices,
then $X$ is hypergeometric and
\[
p_{k,n}
=
\Pr[X=1]
=
k\cdot
\frac{\binom{nk-k}{n-1}}{\binom{nk}{n}}.
\]
Writing the binomial coefficients in falling-factorial form gives
\[
p_{k,n}
=
\prod_{j=0}^{k-2}
\frac{n(k-1)-j}{nk-1-j}.
\]
Each factor is at least $1-1/k$, so
$p_{k,n}\ge (1-1/k)^{k-1}\ge 1/e$. For fixed $k$, each factor equals
\[
\frac{k-1}{k}\left(1+O\!\left(\frac1n\right)\right),
\]
and multiplying the constant number $k-1$ of factors gives
\[
p_{k,n}
=
\left(1-\frac1k\right)^{k-1}
+O\!\left(\frac1n\right).
\]
The binomial interpretation follows from
\[
\Pr\!\left[\mathrm{Bin}\!\left(k,\frac1k\right)=1\right]
=
k\cdot \frac1k \left(1-\frac1k\right)^{k-1}.
\]
\end{proof}

\begin{lemma}
\label{lem:hypergeom_truncation}
Let $X$ be a hypergeometric random variable with mean $\mu=np$. Then
\[
\E[\min(X,\mu)]
\ge
\mu-\frac12\sqrt{np(1-p)}.
\]
\end{lemma}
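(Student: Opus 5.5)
Writing $\min(X,\mu)=X-(X-\mu)^+$ and taking expectations gives
\[
\E[\min(X,\mu)] = \E[X]-\E[(X-\mu)^+] = \mu-\E[(X-\mu)^+],
\]
so the statement reduces to the bound $\E[(X-\mu)^+]\le \tfrac12\sqrt{np(1-p)}$. I would control this positive-part deviation through the absolute deviation.

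Since $\E[X-\mu]=0$, the positive and negative parts of $X-\mu$ have equal expectation, hence
\[
\E[(X-\mu)^+]=\tfrac12\,\E|X-\mu|.
\]
By Cauchy--Schwarz (or Jensen), $\E|X-\mu|\le\sqrt{\Var(X)}$. It therefore suffices to bound the variance of the hypergeometric variable.

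For $X$ counting the successes in $n$ draws without replacement from a population of size $N$ with success fraction $p$, the standard formula is
\[
\Var(X)=np(1-p)\,\frac{N-n}{N-1}.
\]
Because $n\ge1$, the finite-population correction $\frac{N-n}{N-1}$ is at most $1$, so $\Var(X)\le np(1-p)$. Chaining the three steps gives $\E[(X-\mu)^+]\le\tfrac12\sqrt{np(1-p)}$, which is the claim.

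There is no real obstacle. The only point needing care is to state the parametrization, namely $n$ draws and success probability $p$ per draw, under which the mean is $\mu=np$, and to check that the finite-population factor is at most $1$. If one prefers to avoid the explicit variance formula, one can instead write $X$ as a sum of $n$ indicators whose pairwise covariances are $-p(1-p)/(N-1)\le 0$, and conclude $\Var(X)\le np(1-p)$ directly.
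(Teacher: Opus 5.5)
Your proposal is correct and follows essentially the same route as the paper. Both arguments write $\min(X,\mu)=X-(X-\mu)^+$, use $\E[(X-\mu)^+]=\tfrac12\E|X-\mu|$ together with $\E|X-\mu|\le\sqrt{\Var(X)}$, and then bound the hypergeometric variance by the binomial variance $np(1-p)$; you make the finite-population correction $\frac{N-n}{N-1}\le 1$ explicit, where the paper only cites this as a standard bound.
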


\begin{proof}
Since
\[
\min(X,\mu)=X-(X-\mu)^+,
\]
we have
\[
\E[\min(X,\mu)]
=
\mu-\E[(X-\mu)^+].
\]
Moreover,
\[
\E[(X-\mu)^+]
=
\frac12\E[|X-\mu|]
\le
\frac12\sqrt{\Var(X)}
\le
\frac12\sqrt{np(1-p)},
\]
where the last inequality uses the standard variance bound for a
hypergeometric random variable by the corresponding binomial variance.
\end{proof}

\subsection{Tentative-edge value}

\begin{proof}[Proof of Lemma~\ref{lemma:ic_algo_gives_good_gain}]
Fix a block index $t\in[T,m]$. Let $\widehat M_t$ be the maximum-weight
cardinality-$t$ matching between $L$ and the online vertices in
$R_1\cup\cdots\cup R_t$.

We first lower bound $\E[w(\widehat M_t)]$. Let $M^\star$ be an optimal
offline matching, and let $N_t$ be the number of online endpoints of
$M^\star$ that lie in the first $t$ blocks. Then $N_t$ is hypergeometric
with mean $t$. By keeping the largest $\min(N_t,t)$ edges of $M^\star$ that
appear in the first $t$ blocks, we obtain a feasible cardinality-$t$ matching
on the revealed instance. Therefore, using Lemma~\ref{lem:hypergeom_truncation},
\[
\E[w(\widehat M_t)]
\ge
\frac{\E[\min(N_t,t)]}{m}\,w(M^\star)
\ge
\frac{t}{m}
\left(1-\frac1{2\sqrt t}\right)w(M^\star).
\]

Now condition on the set $R_1\cup\cdots\cup R_t$, and hence on
$\widehat M_t$. Under the random partition of these $tk$ online vertices into
$t$ blocks of size $k$, the probability that exactly one online endpoint
matched by $\widehat M_t$ lies in the last block $R_t$ is $p_{k,t}$. On this
event, let $r_{t,s}$ be that unique endpoint. Then the matching
$\widehat M_t$ is feasible for matching $L$ vs. $R_1\cup\cdots\cup R_{t-1}\cup\{r_{t,s}\}$
because $\widehat M_t$ uses no other vertex from $R_t$. By our consistent
tie-breaking convention, it is also the matching chosen for this restricted
instance. Hence the tentative edge assigned to $r_{t,s}$ is its edge in
$\widehat M_t$. Since the unique edge of $\widehat M_t$ landing in $R_t$ is,
by symmetry, a uniformly random edge of $\widehat M_t$, we get
\[
\E\!\left[
\max_{s\in[k]} w(e_{t,s}) \,\middle|\, \widehat M_t
\right]
\ge
p_{k,t}\cdot \frac{w(\widehat M_t)}{t}.
\]
Taking expectations and applying the previous bound gives
\[
\E\!\left[\max_{s\in[k]} w(e_{t,s})\right]
\ge
\frac{p_{k,t}}{m}
\left(1-\frac1{2\sqrt t}\right)w(M^\star).
\]
Finally, by Lemma~\ref{lem:pkasympt}, $p_{k,t}=p_k+O(1/t)$. Since
$t\ge T=\Theta(m)$, this yields
\[
\E\!\left[\max_{s\in[k]} w(e_{t,s})\right]
\ge
\frac{p_k}{m}w(M^\star)
\left(1-O\!\left(\frac1{\sqrt m}\right)\right).
\]
\end{proof}

\subsection{Amortized availability}

The next lemma is the one-block amortization step. It is the point at which
we average over the random assignment of online vertices to the current block,
rather than conditioning on a fixed block partition.

\begin{lemma}
\label{lem:one_block_amortization}
Let $|R|=bk$ for some $b\ge 2$, and sample a set $V\subseteq R$ of size $k$
uniformly at random. Fix an offline vertex $u\in L$. For each $v\in V$, let
$M_{V,v}$ be the maximum-weight cardinality-$b$ matching between $L$ and
$R\setminus (V\setminus\{v\})$. Then
\[
\E_V\!\left[
\mathbf{1}\{\exists v\in V : (u,v)\in M_{V,v}\}
\right]
\le
\frac{k}{(b-1)k+1}.
\]
\end{lemma}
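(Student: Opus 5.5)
Each pair $(V,v)$ with $v\in V$ is determined by the $(b-1)k+1$-element set $S:=R\setminus(V\setminus\{v\})$ together with the distinguished element $v\in S$. The plan is to union bound over $v\in V$ and then re-index by $S$. The sum $\sum_{v\in V}\mathbf 1\{(u,v)\in M_{V,v}\}$ dominates the indicator in the statement, so it suffices to bound its expectation.

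\textbf{Step 1 (union bound and re-indexing).} Write
\[
\E_V\Big[\sum_{v\in V}\mathbf 1\{(u,v)\in M_{V,v}\}\Big]=k\cdot \Pr_{V,\,v}\big[(u,v)\in M_{S}\big],
\]
where $v$ is uniform in $V$ and $M_S$ denotes the maximum-weight cardinality-$b$ matching between $L$ and $S$. By our fixed tie-breaking convention, $M_S$ depends only on the set $S$ and not on how it arose.

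\textbf{Step 2 (distribution of $(S,v)$).} Choosing $V$ uniformly among $k$-subsets and then $v$ uniformly in $V$ is the same as the following: choose $W=V\setminus\{v\}$ uniformly among $(k-1)$-subsets of $R$, and then choose $v$ uniformly from $R\setminus W$. Equivalently, $S=R\setminus W$ is a uniformly random subset of size $(b-1)k+1$, and $v$ is uniform in $S$. Checking this counting identity is routine: each pair $(W,v)$ with $v\notin W$ arises from exactly one $V$, and both descriptions give the uniform distribution over such pairs.

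\textbf{Step 3 (conditioning on $S$).} Fix $S$. Then $M_S$ is fixed, and $u$ is matched in $M_S$ to at most one vertex $v^\star\in S$. Hence $\Pr[(u,v)\in M_S\mid S]\le 1/|S|=1/((b-1)k+1)$. Multiplying by $k$ gives the bound $k/((b-1)k+1)$. This is the same ``$u$ has at most one partner, and the candidate is uniform among the revealed set'' argument as in Lemma~\ref{lemma:low_prob_greedy_match}.

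\textbf{Main obstacle.} The subtle point is Step 2, together with justifying that $M_{V,v}$ is genuinely a function of $S$ alone. Since $u$ may be the best partner of several candidates in the same $V$, each evaluated against a different $S$, no per-$V$ bound holds. The remedy is to give up on the existence indicator, pass to the count via the union bound, and exchange the order of sampling so that the uniformity of $v$ within $S$ becomes visible. The cardinality constraint $b\le |S|$ ensures that $M_S$ is well defined, and $b\ge2$ keeps $|S|\ge k+1$.
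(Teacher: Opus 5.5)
Your proposal is correct and follows the paper's own proof: you pass to the sum over $v\in V$ by a union bound, re-expose $V$ as a uniform $(k-1)$-set $W$ plus a uniform element $v\notin W$, and note that $u$ has at most one partner in the single matching on $R\setminus W$, which gives $1/((b-1)k+1)$ before multiplying by $k$. Your explicit remark that consistent tie-breaking makes $M_{V,v}$ a function of $S=R\setminus W$ alone is exactly what the paper uses implicitly when it treats this matching as fixed given $V_1$.
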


\begin{proof}
Fix $V$ and write
\[
\mathbf{1}\{\exists v\in V : (u,v)\in M_{V,v}\}
\le
\sum_{v\in V}\mathbf{1}\{(u,v)\in M_{V,v}\}.
\]
We expose $V$ in two steps: first sample $V_1\subseteq R$ with
$|V_1|=k-1$, and then sample $r$ uniformly from $R\setminus V_1$, setting
$V=V_1\cup\{r\}$. Conditional on $V_1$, the matching
$M_{V_1\cup\{r\},r}$ is a maximum-weight cardinality-$b$ matching between
$L$ and $R\setminus V_1$. In this matching, the fixed offline vertex $u$ is
matched to at most one right-side vertex. Since $r$ is uniform in
$R\setminus V_1$, we have
\[
\Pr\!\left[(u,r)\in M_{V_1\cup\{r\},r}\mid V_1\right]
\le
\frac{1}{|R\setminus V_1|}
=
\frac{1}{(b-1)k+1}.
\]
Multiplying by the $k$ possible choices of the distinguished element of $V$
gives the claim.
\end{proof}

\begin{lemma}
\label{lemma:prob_sel_low}
Fix $u\in L$ and $t\in\{T,\ldots,m\}$. Let
\[
A_t(u)
:=
\{u\notin e_{j,s}\text{ for all }j\in\{T,\ldots,t-1\},\ s\in[k]\}.
\]
Conditioning on the future blocks $(R_t,\ldots,R_m)$,
\[
\Pr[A_t(u)\mid R_t,\ldots,R_m]
\ge
\frac{T-2}{t-2},
\]
with the convention that the ratio is $1$ when $t=T$. In particular, when
$T=\Theta(m)$,
\[
\Pr[A_t(u)\mid R_t,\ldots,R_m]
\ge
\frac{T-1}{t-1}-O\!\left(\frac1m\right).
\]
\end{lemma}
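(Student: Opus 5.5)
We argue by backward induction on $t$, following the pattern of Lemma~\ref{lemma:low_prob_greedy_match}. The one-block amortization bound, Lemma~\ref{lem:one_block_amortization}, takes the place of the single-arrival bound $1/t$. For $t=T$ the event $A_T(u)$ is vacuous, so the probability is $1$. For $t>T$, note that
\[
A_t(u)=A_{t-1}(u)\cap B_{t-1}(u),\qquad B_{t-1}(u):=\{u\notin e_{t-1,s}\ \forall s\in[k]\}.
\]
Conditioning on $(R_t,\ldots,R_m)$ fixes the set $S:=R_1\cup\cdots\cup R_{t-1}$, which has size $(t-1)k$. It leaves random the partition of $S$ into blocks, the order of those blocks, and the within-block orders. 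We expose the randomness further by revealing which $k$-subset $V\subseteq S$ forms block $R_{t-1}$. Given $V$, the event $B_{t-1}(u)$ is determined: each $e_{t-1,s}$ is computed from the cardinality-$(t-1)$ maximum matching between $L$ and $(S\setminus V)\cup\{r_{t-1,s}\}$. Crucially, this does not depend on the order inside $R_{t-1}$. This is exactly the matching $M_{V,v}$ of Lemma~\ref{lem:one_block_amortization} with $R$ replaced by $S$ and $b=t-1\ge 2$.

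Moreover, conditioned on $(R_{t-1},R_t,\ldots,R_m)$, the earlier blocks $R_1,\ldots,R_{t-2}$ form a uniformly random ordered partition of $S\setminus V$. By the induction hypothesis applied at $t-1$,
\[
\Pr[A_{t-1}(u)\mid R_{t-1},\ldots,R_m]\ge \frac{T-2}{t-3}.
\]
For $t-1=T$ we use the convention that this bound equals $1$; we handle that base step separately so that the ratio is well defined. Hence
\[
\Pr[A_t(u)\mid R_t,\ldots,R_m]=\E_V\!\big[\mathbf 1\{B_{t-1}(u)\}\Pr[A_{t-1}(u)\mid V,R_t,\ldots]\big]\ge \frac{T-2}{t-3}\Pr_V[B_{t-1}(u)].
\]
Lemma~\ref{lem:one_block_amortization} with $b=t-1$ then gives
\[
\Pr_V[B_{t-1}(u)]\ge 1-\frac{k}{(t-2)k+1}\ge 1-\frac{1}{t-2}=\frac{t-3}{t-2}.
\]
Multiplying the two bounds telescopes to $(T-2)/(t-2)$, as required. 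In the base step $t=T+1$ we get directly
\[
\Pr\ge 1-\frac{1}{T-1}=\frac{T-2}{T-1}.
\]
This agrees with the formula at $t=T+1$, so the base step and the inductive step are consistent.

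The point needing the most care is the conditional-independence bookkeeping. We must check that, given $(R_{t-1},\ldots,R_m)$, the remaining randomness really is a uniform random block partition and ordering of $S\setminus V$, so that the inductive hypothesis (itself a statement conditioned on future blocks) applies verbatim. We must also check that $B_{t-1}(u)$ depends only on $V$ and $S$, and not on the within-block order. The latter holds by the per-candidate construction of the tentative edges. It also uses the consistent tie-breaking convention: the matching computed for $L$ versus $(S\setminus V)\cup\{v\}$ is a deterministic function of that vertex set.

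Finally, for the asymptotic statement, write
\[
\frac{T-2}{t-2}-\frac{T-1}{t-1}=\frac{T-t}{(t-1)(t-2)}.
\]
In absolute value this is at most $1/(t-2)$, which is $O(1/m)$ since $t\ge T=\Theta(m)$.
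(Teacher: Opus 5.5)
Your proposal is correct and follows essentially the same route as the paper. You condition on the future blocks and apply Lemma~\ref{lem:one_block_amortization} to the most recent past block, which is a uniform $k$-subset of the remaining $bk$ vertices, to get a per-block survival factor $(b-2)/(b-1)$; the product over $b=T,\ldots,t-1$ then telescopes to $(T-2)/(t-2)$. Your explicit induction only spells out the conditional chaining that the paper compresses into ``iterating this bound,'' and your identity $\frac{T-2}{t-2}-\frac{T-1}{t-1}=\frac{T-t}{(t-1)(t-2)}$ makes the $O(1/m)$ step precise.
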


\begin{proof}
The claim is trivial for $t=T$, since no block in
$\{T,\ldots,t-1\}$ has arrived.

For $b\ge T$, condition on the future blocks
$(R_{b+1},\ldots,R_m)$. The current block $R_b$ is a uniformly random
$k$-subset of the remaining $bk$ vertices. Applying
Lemma~\ref{lem:one_block_amortization} with this current block shows that
\[
\Pr[
u\in e_{b,s}\text{ for some }s\in[k]
\mid R_{b+1},\ldots,R_m
]
\le
\frac{k}{(b-1)k+1}
\le
\frac1{b-1}.
\]
Hence
\[
\Pr[
u\notin e_{b,s}\text{ for all }s\in[k]
\mid R_{b+1},\ldots,R_m
]
\ge
1-\frac1{b-1}
=
\frac{b-2}{b-1}.
\]
Iterating this bound for blocks $b=T,T+1,\ldots,t-1$ yields
\[
\Pr[A_t(u)\mid R_t,\ldots,R_m]
\ge
\prod_{b=T}^{t-1}\frac{b-2}{b-1}
=
\frac{T-2}{t-2}.
\]
Finally,
\[
\frac{T-2}{t-2}
=
\frac{T-1}{t-1}-O\!\left(\frac1m\right)
\]
uniformly over $t\in[T,m]$ when $T=\Theta(m)$.
\end{proof}

The preceding lemma controls tentative edges. The next lemma controls the
fallback/random matches that are used either in the initial blocks or when
the selected tentative endpoint is unavailable.

\begin{lemma}
\label{lemma:prob_random_match}
Fix $u\in L$ and $t\in[T,m]$. Conditional on the event $A_t(u)$ and on
$(R_t,\ldots,R_m)$,
\[
\Pr[
u \text{ is not used before block }t
\mid A_t(u), R_t,\ldots,R_m
]
\ge
1-y_k\frac{t-1}{m}.
\]
\end{lemma}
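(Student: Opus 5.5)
We mirror the proof of Lemma~\ref{lemma:low_prob_ran_match}: all random (fallback or initial-phase) matches are produced by a single permutation $\rho$ of $L$, sampled before any arrival and independent of the arrival order, of the tentative-edge computations, and of the secretary coin flips. Whenever a random available vertex is needed, the algorithm takes the highest-ranked available vertex under $\rho$.

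\textbf{Counting matches before block $t$.} Let $N_{t-1}$ be the number of blocks among $R_1,\ldots,R_{t-1}$ that produce a match. Each block produces at most one match, so at most $N_{t-1}$ offline vertices are consumed before block $t$. By the IC property of the construction, each block produces a match with probability exactly $y_k$: in the initial phase this is by construction, and in later blocks it is the hire probability of $\mathcal{M}_p$ with $p=\eta_k$. Hence $\E[N_{t-1}]\le y_k(t-1)$. This hire probability of $\mathcal{M}_p$ depends only on $k$ and $p$, not on the candidate values. So even after conditioning on $A_t(u)$ and on $(R_t,\ldots,R_m)$, each block still matches with probability $y_k$, and I would verify this rather than assume it.

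\textbf{Key deterministic claim.} Suppose $u$ is not a tentative endpoint in blocks $T,\ldots,t-1$ (the event $A_t(u)$). Then the only way $u$ can be used before block $t$ is as a $\rho$-greedy fallback or initial match. Each such match takes the highest-ranked available vertex under $\rho$, and every previously consumed vertex that is ranked above it must already have been used. Therefore, if $u$ is used by a random match, all vertices ranked above $u$ in $\rho$ were consumed, so $u$ must lie within the top $N_{t-1}$ positions of $\rho$. In short, $u$ is used before block $t$ only if $\operatorname{pos}_\rho(u)\le N_{t-1}$.

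\textbf{Main obstacle: dependence between $\rho$ and $N_{t-1}$.} The remaining step is $\Pr[\operatorname{pos}_\rho(u)\le N_{t-1}\mid\cdot]$. Since $\rho$ is uniform and independent of everything determining $N_{t-1}$, one would like this to equal $\E[N_{t-1}\mid\cdot]/m\le y_k(t-1)/m$. The difficulty is that $\rho$ affects which vertices are available, and hence later fallback decisions. It does \emph{not} affect the tentative edges, the secretary selections, or whether a block matches. So $N_{t-1}$ and the event $A_t(u)$ are functions of the arrival order and the secretary coins alone, independent of $\rho$.

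Conditioning on those, $\operatorname{pos}_\rho(u)$ remains uniform on $[m]$. This gives probability at most $N_{t-1}/m$, and taking the expectation yields the bound $1-y_k\frac{t-1}{m}$. If exact independence is doubtful, the fallback is the cruder deterministic bound $N_{t-1}\le t-1$, though that loses the $y_k$ factor.
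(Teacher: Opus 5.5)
Your proposal is correct and follows the paper's proof: count the matches $H_t$ (your $N_{t-1}$) made before block $t$, note that on $A_t(u)$ the vertex $u$ can be consumed only by the $\rho$-greedy fallback and hence only if it lies in the top $H_t$ positions of $\rho$, and use that $\rho$ is independent of the block structure and secretary randomness to bound the probability by $\E[H_t \mid A_t(u),R_t,\ldots,R_m]/m \le y_k(t-1)/m$. The step you flag for verification is closed in the paper as you anticipate: $A_t(u)$ depends only on the ordered sequence of unordered blocks, since each tentative edge uses only the earlier blocks plus the candidate itself. Conditioning on it therefore fixes the candidate values while leaving within-block orders uniform and secretary coins independent, and $\mathcal{M}_p$ hires with probability $y_k$ for any fixed values.
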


\begin{proof}
Let $H_t$ be the total number of online vertices matched before block $t$.
Each earlier block produces a match with probability $y_k$: in the first
$T-1$ blocks this is by construction, and in later blocks it is the hire
probability of the IC secretary subroutine. This remains true conditional
on any fixed ordered partition into unordered blocks: the candidate values
in each block are then fixed, while their order is uniform and the secretary
coins are independent. The event $A_t(u)$ depends only on these unordered
blocks. Consequently,
\[
\E[H_t\mid A_t(u),R_t,\ldots,R_m]=y_k(t-1).
\]

Recall that fallback matches are implemented using a uniformly random
priority order $\rho$ over $L$: whenever the algorithm needs a fallback
offline vertex, it takes the highest-ranked available vertex under $\rho$.
On the event $A_t(u)$, the vertex $u$ is never used as a tentative endpoint
before block $t$. Therefore, if $u$ is nevertheless matched before block $t$,
it must be because the fallback rule reaches $u$. This can happen only if
$u$ lies among the first $H_t$ vertices in the random priority order $\rho$:
if $u$ has rank larger than $H_t$, then fewer than $\rank_\rho(u)$ offline
vertices have been matched before block $t$, so the fallback rule cannot
have selected $u$.

Since $\rho$ is uniform and independent of the block partition and secretary
randomness,
\[
\Pr[
u \text{ is matched before block }t
\mid A_t(u), R_t,\ldots,R_m, H_t
]
\le
\frac{H_t}{m}.
\]
Taking expectations gives
\[
\Pr[
u \text{ is matched before block }t
\mid A_t(u), R_t,\ldots,R_m
]
\le
\frac{\E[H_t\mid A_t(u),R_t,\ldots,R_m]}{m}
\le
y_k\frac{t-1}{m}.
\]
Taking complements proves the claim.
\end{proof}

\begin{proof}[Proof of Lemma~\ref{lemma:amortized_availability}]
Fix $u\in L$ and $t\in[T,m]$. For $u$ to be available at the beginning of
block $t$, it suffices that it has not appeared as a tentative endpoint in
blocks $T,\ldots,t-1$ and has not been consumed by a fallback/random match
before block $t$. Lemmas~\ref{lemma:prob_sel_low} and
\ref{lemma:prob_random_match} therefore imply
\[
\Pr[
u\text{ is available at the beginning of block }t
\mid R_t,\ldots,R_m
]
\ge
\frac{T-2}{t-2}
\left(1-y_k\frac{t-1}{m}\right).
\]
Since $T=\Theta(m)$ in the optimized algorithm,
\[
\frac{T-2}{t-2}
=
\frac{T-1}{t-1}
-
O\!\left(\frac1m\right),
\]
and hence
\[
\Pr[
u\text{ is available at the beginning of block }t
\mid R_t,\ldots,R_m
]
\ge
\left(
\frac{T-1}{t-1}
-
O\!\left(\frac1m\right)
\right)
\left(1-y_k\frac{t-1}{m}\right).
\]
The $O(1/m)$ loss is dominated by the final
$O(1/\sqrt m)$ error term in Theorem~\ref{thm:ic_edge_wt2}.
\end{proof}

\subsection{Riemann-sum estimate}

Finally, we record the standard discretization estimate used in the proof of
Theorem~\ref{thm:ic_edge_wt2}. Let $T=\lceil \alpha m\rceil$ for a constant
$\alpha\in(0,1)$ and let $\gamma\in[0,1]$. Then
\[
\frac1m
\sum_{t=T}^m
\frac{T-1}{t-1}
\left(1-\gamma\frac{t-1}{m}\right)
=
\alpha
\int_\alpha^1
\frac{1-\gamma x}{x}\,dx
+
O\!\left(\frac1m\right).
\]
The integral equals
\[
\alpha
\int_\alpha^1
\frac{1-\gamma x}{x}\,dx
=
\alpha
\left(
\log\frac1\alpha
-
\gamma(1-\alpha)
\right).
\]
Thus the objective optimized in Section~\ref{sec:edge_wt} is
\[
f(\alpha;\gamma)
=
\alpha
\left(
\log\frac1\alpha
-
\gamma(1-\alpha)
\right).
\]

\section{Details for Binary-Weighted Complete Graphs}
\label{app:binary}

We give the full algorithm and proof of Theorem~\ref{thm:unweighted}. The
algorithm below is the binary-weighted block-greedy algorithm described in
Section~\ref{sec:unweighted}.

\begin{algorithm}
\caption{Block-Greedy IC Algorithm for Binary Weights}
\label{alg:block_matching_binary}

\KwIn{Offline vertices $L$ with $|L|=m$, online vertices $R$ with $|R|=km$}

Process the arrival sequence in $m$ consecutive blocks
$R_1,\ldots,R_m$, each of size $k$\;
$M_0\gets \emptyset$\;

\For{$t=1$ \KwTo $m$}{
    $M_t\gets M_{t-1}$\;
    Initiate a fresh IC secretary instance of Algorithm~\ref{alg:ic_sec}
    with $k$ candidates and the optimal parameter choice\;

    \For{$s=1$ \KwTo $k$}{
        Observe the arriving vertex $r_{t,s}$\;

        \If{some vertex in $R_t$ has already been matched}{
            \textbf{break}\;
        }

        \eIf{$r_{t,s}$ has an unmatched neighbor through a weight-$1$ edge}{
            $\gain(r_{t,s})\gets 1$\;
        }{
            $\gain(r_{t,s})\gets 0$\;
        }

        Feed value $\gain(r_{t,s})$ to the IC secretary subroutine,
        breaking ties by an independent infinitesimal perturbation or a
        fixed random tie-breaking rule\;

        \If{the IC secretary subroutine selects $r_{t,s}$}{
            \eIf{$\gain(r_{t,s})=1$}{
                Choose an available weight-$1$ neighbor $\ell_{t,s}$ of
                $r_{t,s}$\;
            }{
                Choose $\ell_{t,s}$ uniformly at random from the available
                offline vertices\;
            }
            $M_t\gets M_{t-1}\cup\{(\ell_{t,s},r_{t,s})\}$\;
        }
    }
}
\Return{$M_m$}\;
\end{algorithm}

We first show that the block constraint loses only a constant factor of the
offline optimum.

\begin{lemma}
\label{lem:block_opt_binary}
Let $M^\star$ be an optimal offline matching. Let $\pi$ be the uniformly
random arrival permutation, and let $M^\star_\pi$ be the maximum-weight
matching subject to matching at most one online vertex from each block. Then
\[
\mathbb{E}_\pi[w(M^\star_\pi)]\ge \left(1-\frac1e\right)w(M^\star).
\]
\end{lemma}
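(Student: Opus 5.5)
The plan is to lower bound $w(M^\star_\pi)$ by an explicit feasible block-constrained matching built from $M^\star$. Since weights are binary, we may assume $M^\star$ consists only of weight-$1$ edges (discard weight-$0$ edges), so $w(M^\star)=|M^\star|=:W\le m$. Call the online endpoints of $M^\star$ \emph{marked}; there are $W$ of them. For each block $R_j$ that contains at least one marked vertex, keep exactly one edge of $M^\star$ whose online endpoint lies in $R_j$, say the lexicographically first. These kept edges are a subset of $M^\star$, so they are vertex-disjoint on the offline side, and they use at most one online vertex per block. They therefore form a feasible matching for the block-constrained problem, and $w(M^\star_\pi)$ is at least the number $X_\pi$ of blocks containing at least one marked vertex.

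It then remains to show $\E_\pi[X_\pi]\ge(1-1/e)W$. By linearity,
\[
\E_\pi[X_\pi]=\sum_{j=1}^m \Pr[R_j\text{ contains a marked vertex}] = m\Big(1-\binom{km-W}{k}\Big/\binom{km}{k}\Big).
\]
This uses the fact that $\pi$ induces a uniformly random partition into blocks of size $k$, so each block is a uniform $k$-subset of $R$. Next, bound the hypergeometric non-hitting probability:
\[
\binom{km-W}{k}\Big/\binom{km}{k}=\prod_{i=0}^{k-1}\frac{km-W-i}{km-i}\le\Big(1-\frac{W}{km}\Big)^k\le e^{-W/m}.
\]
This gives $\E[X_\pi]\ge m(1-e^{-W/m})$.

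The final step is to compare $m(1-e^{-x})$ with $(1-1/e)mx$ for $x=W/m\in[0,1]$. The function $1-e^{-x}$ is concave, it equals $0$ at $x=0$, and it equals $1-1/e$ at $x=1$. Concavity therefore gives $1-e^{-x}\ge(1-1/e)x$ on $[0,1]$. Hence $\E[w(M^\star_\pi)]\ge\E[X_\pi]\ge(1-1/e)W$.

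There is no serious obstacle here. The points needing care are the following:
\begin{itemize}
\item Justify that $W\le m$, which holds because $|L|=m$; this is what keeps $x\le 1$ and lets the concavity argument apply.
\item Confirm that the block-constrained optimum $M^\star_\pi$ is well defined via the paper's fixed tie-breaking, and that it is independent of block order, so only the partition matters.
\item Check the product inequality $\frac{km-W-i}{km-i}\le 1-\frac{W}{km}$, which holds since $i\ge 0$.
\end{itemize}
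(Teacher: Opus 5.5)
Your proposal is correct and follows essentially the same route as the paper. Both arguments mark the online endpoints of the weight-$1$ edges of $M^\star$, keep one marked edge per block that contains a marked vertex, and bound the miss probability by $\binom{km-W}{k}/\binom{km}{k}\le(1-W/(km))^k\le e^{-W/m}$. Both then conclude with $1-e^{-x}\ge(1-1/e)x$ on $[0,1]$; your concavity argument simply spells out the step the paper states as following from $W\le m$.
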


\begin{proof}
Let $M=w(M^\star)$, and mark the $M$ online vertices that are matched by
weight-$1$ edges in $M^\star$. For any fixed block $R_j$,
\[
\Pr_\pi[R_j \text{ contains no marked vertex}]
=
\frac{\binom{|R|-M}{k}}{\binom{|R|}{k}}
\le
\left(1-\frac{M}{|R|}\right)^k
=
\left(1-\frac{M}{km}\right)^k .
\]
Since $M\le m$,
\[
\Pr_\pi[R_j \text{ contains a marked vertex}]
\ge
1-e^{-M/m}
\ge
\left(1-\frac1e\right)\frac{M}{m}.
\]
Thus the expected number of blocks containing at least one marked vertex is
at least $(1-1/e)M$. Choosing one marked edge from each such block gives a
feasible matching that uses at most one online vertex per block, because the
marked edges come from the matching $M^\star$ and hence have distinct offline
endpoints. Therefore
\[
\mathbb{E}_\pi[w(M^\star_\pi)]\ge \left(1-\frac1e\right)M.
\]
\end{proof}

\begin{proof}[Proof of Theorem~\ref{thm:unweighted}]
The incentive-compatibility claim follows directly from the block structure.
In every block, the algorithm runs the same incentive-compatible secretary
subroutine, and every secretary selected by this subroutine is matched. Hence
all positions inside a block have the same ex-ante match probability; since
every block uses the same rule, the match probability is the same across all
arrival positions.

It remains to prove the competitive ratio. Fix an unordered partition into
blocks and choose a constrained optimal matching $M^\star_\pi$ independently
of their eventual order. Order the blocks uniformly at random and independently
permute the vertices inside each block. For any block contributing a weight-$1$
edge to $M^\star_\pi$, condition on its arrival position $j$ and write its edge as
$(\ell_j^\star,r_j^\star)$.

We use the following accounting scheme. Whenever the algorithm matches a
weight-$1$ edge $(\ell,r)$ in block $R_j$, assign reward $\beta$ to the
offline vertex $\ell$ and reward $1-\beta$ to the block $R_j$. Each
weight-$1$ edge therefore contributes total accounting reward exactly $1$.

Fix a contributing block $R_j$ and its offline endpoint $\ell_j^\star$. Let
\[
\alpha_j
:=
\Pr[\ell_j^\star \text{ is already matched by a weight-}1\text{ edge when }
R_j \text{ arrives}],
\]
and
\[
\gamma_j
:=
\Pr[\ell_j^\star \text{ is already matched by a weight-}0\text{ edge when }
R_j \text{ arrives}].
\]
If $\ell_j^\star$ is already matched by a weight-$1$ edge, the offline
endpoint receives accounting reward $\beta$. If $\ell_j^\star$ is unmatched,
then the block contains a candidate of gain $1$, and the IC secretary
subroutine selects a gain-$1$ candidate with probability at least $q_k$.
Therefore the expected combined accounting reward of the pair
$(R_j,\ell_j^\star)$ is at least
\[
\alpha_j\beta
+
(1-\alpha_j-\gamma_j)q_k(1-\beta).
\]
The only remaining bad case is that $\ell_j^\star$ was consumed earlier by a
weight-$0$ match. Since the algorithm makes at most one match in each earlier
block, and weight-$0$ fallback matches choose their offline endpoint
uniformly from the available vertices, we use the crude bound
\[
\gamma_j\le \frac{j-1}{m}.
\]
Hence
\begin{align*}
\mathbb{E}[\text{Reward}(R_j)+\text{Reward}(\ell_j^\star)]
&\ge
\alpha_j\beta
+
\left(1-\alpha_j-\frac{j-1}{m}\right)q_k(1-\beta)\\
&=
\alpha_j\bigl(\beta-q_k(1-\beta)\bigr)
+
\left(1-\frac{j-1}{m}\right)q_k(1-\beta).
\end{align*}
Choose
\[
\beta=\frac{q_k}{1+q_k}.
\]
Then $\beta=q_k(1-\beta)$, so the dependence on $\alpha_j$ cancels and
\[
\mathbb{E}[\text{Reward}(R_j)+\text{Reward}(\ell_j^\star)]
\ge
\left(1-\frac{j-1}{m}\right)\frac{q_k}{1+q_k}.
\]
Each contributing block has a uniformly random arrival position. Averaging its
bound over $j=1,\ldots,m$ gives the factor
\[
\frac1m\sum_{j=1}^m
\left(1-\frac{j-1}{m}\right)
=
\frac{m+1}{2m}.
\]
Summing over the contributing blocks, conditioned on the unordered partition
and averaging over block order and internal randomness, therefore gives
\[
\mathbb{E}[\text{weight obtained by Algorithm~\ref{alg:block_matching_binary}}]
\ge
\frac{m+1}{2m}\cdot \frac{q_k}{1+q_k}\cdot w(M^\star_\pi).
\]
Taking expectation over the random partition and applying
Lemma~\ref{lem:block_opt_binary},
\[
\mathbb{E}[\text{ALG}]
\ge
\left(1-\frac1e\right)
\frac{m+1}{2m}\cdot \frac{q_k}{1+q_k}\cdot w(M^\star).
\]
This proves the claimed competitive ratio.
\end{proof}

\section{Padding for Arbitrary Arrival Counts}
\label{app:padding}

In this appendix we extend the arrival-time IC property of
Algorithms~\ref{alg:ic_edge_wt} and~\ref{alg:block_matching_binary} to arbitrary
numbers of online and offline vertices. The main idea is to place each dummy
vertex at the end of its block, so that the real vertices in the block form a
random-order prefix of the secretary instance.

Let $m,n\ge 1$, $k:=\lceil n/m\rceil$, and $d:=km-n$. Before arrivals begin,
independently sample a uniformly random subset $D\subseteq[m]$ of cardinality
$d$. This is possible since $0\le d<m$. For every $t\in D$, block $t$ consists
of the next $k-1$ real arrivals followed by one dummy vertex. For every
$t\notin D$, block $t$ consists of the next $k$ real arrivals. Thus every block
has $k$ vertices, and the total number of real vertices is
$(m-d)k+d(k-1)=n$. Each dummy has weight $0$ to every offline vertex. We run
the algorithm on these $m$ blocks, processing each dummy according to the
same rule as the other vertices. An offline vertex matched to a dummy remains
unavailable for subsequent matches. At the end, we remove the dummy edges
from the output matching.

This construction can be implemented online. The number of real arrivals in
each block is fixed before arrivals begin, and each real vertex is processed
immediately on arrival. A dummy is processed after the prescribed number of
real arrivals in its block. When $k=1$, some blocks consist only of a dummy.

Let $\eta_k:=y_k/k$ be the parameter of the $k$-candidate IC secretary
subroutine, and use this same parameter in every secretary instance. All
secretary coins and tie-breaking priorities are sampled independently of the
arrival order and of $D$, with fresh randomness for each block.

\begin{proposition}[Arrival-time IC under padding]
\label{prop:padding_ic}
Under the construction above, Algorithms~\ref{alg:ic_edge_wt}
and~\ref{alg:block_matching_binary} are arrival-time IC for the original
online vertices. In either algorithm, for every real arrival position
$i\in[n]$,
\[
\Pr[r_i\text{ is matched}]=\eta_k.
\]
The equality also holds conditional on any fixed choice of $D$.
\end{proposition}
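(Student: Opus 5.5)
The plan is to fix $D$ and show that every real arrival position is matched with probability exactly $\eta_k$. Averaging over $D$ then gives the unconditional statement.

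\textbf{Per-block hiring property.} The key structural fact concerns the secretary mechanism $\mathcal{M}_p$ of Buchbinder et al.~\cite{buchbinder2014secretary} with $p=\eta_k$. When candidates arrive in uniformly random order, each position $i\in[k]$ is selected with probability exactly $p$. Their analysis shows this holds for any fixed multiset of values, with ties broken by the independent random priority. Because $\mathcal{M}_p$ is an online rule, the probability that position $i$ is selected depends only on the relative ranks of the first $i$ candidates. Those relative ranks are uniform whenever the first $i$ positions hold a uniformly random ordering of their values.

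\textbf{Blocks not in $D$.} For $t\notin D$, block $t$ contains $k$ real arrivals. I would condition on the unordered contents of all blocks. This fixes the candidate values:
\begin{itemize}
\item in Algorithm~\ref{alg:ic_edge_wt}, each value is computed from previous blocks plus that candidate alone;
\item in Algorithm~\ref{alg:block_matching_binary}, the gains depend on the current availability state, which is determined before the block starts.
\end{itemize}
The within-block order is uniform and independent of everything else, so each position is selected with probability $\eta_k$. In the later blocks of Algorithm~\ref{alg:ic_edge_wt}, every selected vertex is actually matched, either via its tentative edge or a fallback edge, because at most $t-1<m$ offline vertices have been used. For Algorithm~\ref{alg:block_matching_binary}, the greedy gains inside a block do change as matches occur. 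But after the first selection the block stops, so before any selection all gains equal those at block start. The values seen by the secretary up to its first selection are therefore a fixed multiset in random order.

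\textbf{Initial blocks and blocks in $D$.} For the initial blocks $t<T$ of Algorithm~\ref{alg:ic_edge_wt}, I would specify that the padded version also runs $\mathcal{M}_{\eta_k}$, or equivalently selects a uniformly random position of the $k$ with probability $y_k$. This gives each of the $k$ positions probability $\eta_k$, in particular each real position. For $t\in D$, the real vertices occupy positions $1,\dots,k-1$ in uniformly random order, and the dummy is last. Conditioned on the unordered real set, the values at positions $1,\dots,k-1$ are a fixed multiset of size $k-1$ in uniformly random order. The dummy's value, $0$ or whatever it evaluates to, enters only at position $k$. By the prefix property above, position $i\le k-1$ is selected with probability exactly $\eta_k$, and a selected real vertex is matched. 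A selection of the dummy removes only a dummy edge and does not affect any real position's probability.

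\textbf{Combining, and the main obstacle.} Any real arrival position $i$ corresponds, given $D$, to a deterministic block and within-block slot below $k$ or equal to $k$ for $t\notin D$. The argument above then gives $\Pr[r_i\text{ matched}\mid D]=\eta_k$. The main obstacle is making the ``fixed multiset in uniformly random order'' claim rigorous for each algorithm. In Algorithm~\ref{alg:ic_edge_wt} this requires the candidate values to be independent of the within-block order, which the separate-evaluation construction guarantees. In Algorithm~\ref{alg:block_matching_binary} it requires the observation that gains are frozen until the first selection. A further care point is that the uniform order of real vertices within a block survives conditioning on $D$ and on earlier history. This holds because $D$ is independent of $\pi$, and the block contents are determined by $\pi$ only through unordered sets.
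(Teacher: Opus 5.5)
Your proposal is correct and follows the paper's proof. You fix $D$ and condition on the history and the unordered contents of each block, so that the real candidates' values form a fixed multiset in uniform order (by separate evaluation in Algorithm~\ref{alg:ic_edge_wt}, and by gains frozen until the first selection in Algorithm~\ref{alg:block_matching_binary}). You then put the dummy last, note that every selected real vertex is matched because fewer than $m$ offline vertices have been consumed, and average over $D$.

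The only minor difference is how you get the prefix property. You note that the relative order of the first $i$ candidates is uniformly random whether $h$ or $k$ candidates arrive, and so reduce to the full-length guarantee of Buchbinder et al. The paper's Lemma~\ref{lem:secretary_prefix_ic} instead proves it directly by induction from the conditional selection probability $a_s/s=\eta/(1-(s-1)\eta)$.
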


We first record a property of the IC secretary algorithm. Its selection
probability is unchanged if we run it only on a random-order prefix.

\begin{lemma}[Selection probabilities on a prefix]
\label{lem:secretary_prefix_ic}
Fix $k\ge 1$, $0<\eta\le 1/k$, and $0\le h\le k$. Let $h$ candidates with
fixed values arrive in a uniformly random order, with ties broken by
priorities independent of arrival order. Run the first $h$ steps of the
$k$-candidate mechanism $\mathcal{M}_\eta$ in Algorithm~\ref{alg:ic_sec}.
Then every position $s\in[h]$ is selected with probability exactly $\eta$,
and the probability of making a selection in these $h$ steps is $h\eta$.
\end{lemma}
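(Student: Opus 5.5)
The plan is to reduce the claim to the per-step invariant of the Buchbinder et al.\ mechanism $\mathcal{M}_\eta$: at every step $i$, conditional on no selection having been made earlier, candidate $i$ is selected with probability exactly $\eta/(1-(i-1)\eta)$. The first thing to check is that this invariant depends only on the distribution of \emph{relative} ranks, and not on how many candidates eventually arrive. The decision of $\mathcal{M}_\eta$ at step $i$ uses only three things: the index $i$ (through $r_i = i/(1/\eta - i + 1)$), the rank of candidate $i$ among the first $i$ candidates, and fresh coins. The total $k$ enters only through the admissibility condition $\eta\le 1/k$.

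First I would make the values distinct. Ties are broken by priorities independent of the arrival order, so the $h$ candidates carry a strict total order. Because their order is uniformly random, the relative ranks $\sigma_1,\ldots,\sigma_h$ are independent, and $\sigma_i$ is uniform on $[i]$; here $\sigma_i$ is the rank of the $i$th arrival among the first $i$. This is the same law as the first $h$ relative ranks in a uniformly random order of $k$ candidates, and $h$ plays no role.

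Second, I would compute the conditional selection probability at step $i\le h$. Let $N_i$ be the event that no selection is made before step $i$. $N_i$ is determined by $\sigma_1,\ldots,\sigma_{i-1}$ and earlier coins, so it is independent of $\sigma_i$ and of the step-$i$ coin. There are two cases.
\begin{itemize}
\item If $i\le 1/(2\eta)$, candidate $i$ is best so far with probability $1/i$ and is then taken with probability $r_i$. The conditional probability is $r_i/i = \eta/(1-(i-1)\eta)$.
\item Otherwise, it is taken with probability $\lfloor r_i\rfloor/i + (r_i-\lfloor r_i\rfloor)/i = r_i/i$, which is the same value.
\end{itemize}
This needs the rank thresholds to be valid, i.e.\ $r_i\le i$, and $\lfloor r_i\rfloor+1\le i$ whenever $r_i$ is fractional. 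Both follow from $i\le h\le k\le 1/\eta$, since then $1/\eta - i + 1\ge 1$.

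Third, I would run an induction on $i$ with hypothesis $\Pr[N_i]=1-(i-1)\eta$. The base case $i=1$ is trivial. For the step,
\[
\Pr[\text{select } i]=\Pr[N_i]\cdot\frac{\eta}{1-(i-1)\eta}=\eta,
\]
and hence $\Pr[N_{i+1}]=1-i\eta$. Summing over $s\in[h]$, the probability of some selection in the first $h$ steps is $h\eta$; the case $h=0$ is vacuous.

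The step I expect to be the main obstacle is the independence claim in the second step. One has to argue carefully that truncating the run at $h$ and conditioning on fixed values with independent tie-breaking leaves the relative ranks independent and uniform. One also has to confirm that the mechanism's ``rank in top $\lfloor r\rfloor$'' test refers to rank among candidates seen so far, so that unseen candidates are irrelevant. Once that is settled, the rest is the original invariant computation carried over verbatim.
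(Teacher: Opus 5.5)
Your proposal is correct and follows essentially the same argument as the paper. Both proofs proceed the same way:
\begin{enumerate}
\item condition on the tie-breaking priorities;
\item use that relative ranks are independent and uniform, so the no-earlier-selection event is independent of step $s$;
\item show the conditional selection probability is $a_s/s=\eta/(1-(s-1)\eta)$, using $a_s\le s$ (from $s\le k\le 1/\eta$);
\item induct on $s$.
\end{enumerate}
The only cosmetic difference is the case split: you follow the algorithm's literal branch $i\le 1/(2\eta)$, whereas the paper splits on $a_s\le 1$ versus $a_s>1$. Both give the same value $a_s/s$.
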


\begin{proof}
The claim is immediate when $h=0$. Otherwise, condition on the tie-breaking
priorities, so that the candidates have a fixed strict ordering. Under a
uniformly random permutation, the relative rank at position $s$ is uniform
on $[s]$ and independent of the preceding relative ranks. The event of no
earlier selection depends only on these preceding relative ranks and the
independent secretary coins.

Write
\[
a_s:=\frac{s}{1/\eta-s+1}.
\]
If $a_s\le 1$, the mechanism selects the current candidate with probability
$a_s$ when it is the best so far. If $a_s>1$, it selects any of the top
$\lfloor a_s\rfloor$ relative ranks, and selects relative rank
$\lfloor a_s\rfloor+1$ with probability $a_s-\lfloor a_s\rfloor$.
Since $s\le k$ and $\eta\le 1/k$, we have $a_s\le s$. Therefore, in either
case,
\[
\Pr[\text{selection at }s\mid\text{no earlier selection}]
=\frac{a_s}{s}
=\frac{\eta}{1-(s-1)\eta}.
\]
We now proceed by induction. If each preceding position is selected with
probability $\eta$, then, since at most one candidate is selected, the
probability of no selection before position $s$ is $1-(s-1)\eta$. Hence
\[
\Pr[\text{selection at }s]
=\bigl(1-(s-1)\eta\bigr)
\frac{\eta}{1-(s-1)\eta}
=\eta.
\]
The same calculation gives the base case $s=1$. Summing over the $h$
positions proves the final claim.
\end{proof}

\begin{proof}[Proof of Proposition~\ref{prop:padding_ic}]
Fix $D$. Let $h_t:=k-\mathbf{1}_{\{t\in D\}}$ be the number of real
vertices in block $t$. These block sizes are fixed independently of the
real arrival order. Condition on the history before block $t$ and on the
unordered set of real vertices assigned to this block. Their order within
the block remains uniformly random.

For Algorithm~\ref{alg:ic_edge_wt}, the tentative value of each real vertex
is computed using the preceding blocks together with that vertex alone.
Thus the real candidate values form a fixed multiset, independent of their
order within the block. For $t\ge T$, Lemma~\ref{lem:secretary_prefix_ic}
shows that each of the first $h_t$ positions is selected with probability
$\eta_k$. Any dummy is processed afterward and cannot affect these earlier
decisions. For $t<T$, the algorithm chooses a uniformly random position among
the $k$ padded positions with probability $y_k$. Each real position is
therefore again selected with probability $y_k/k=\eta_k$.

For Algorithm~\ref{alg:block_matching_binary}, the set of available offline
vertices is fixed until the first selection in the block. We may therefore
evaluate the gain of every real candidate against the available set at the
beginning of the block. The algorithm sees these fixed gains in a uniformly
random order until it makes a selection, after which its secretary instance
terminates. Lemma~\ref{lem:secretary_prefix_ic} again implies selection
probability $\eta_k$ for every real position.

In either algorithm, at most one offline vertex is consumed per block,
including matches to dummies. Before a selection in block $t$, at least
$m-t+1\ge 1$ offline vertices remain available. Since the graph is complete,
every selected real vertex is matched, using a fallback edge if necessary.

For fixed $D$, each original arrival index $i\in[n]$ belongs to a fixed
block and a fixed position in its real prefix. Taking expectations over the
conditioning above gives
\[
\Pr[r_i\text{ is matched}\mid D]=\eta_k
\qquad\text{for every }i\in[n].
\]
Averaging over $D$ proves the proposition. Blocks with $h_t=0$ contain no
real arrival positions and require no additional argument.
\end{proof}

The construction above preserves arrival-time IC, but the padded sequence
is not a uniformly random permutation of all real and dummy vertices.
Our competitive-ratio guarantees in
Theorems~\ref{thm:ic_edge_wt} and~\ref{thm:unweighted} concern $n=km$;
extending these guarantees to arbitrary $n$ requires a separate analysis.

\end{document}